\documentclass[12pt,onecolumn,draftclsnofoot,journal]{IEEEtran}
\usepackage{graphicx}
\usepackage{float}
\usepackage{epstopdf}
\usepackage[cmex10]{amsmath}
\usepackage{array}
\usepackage{cite}
\usepackage{amssymb}
\usepackage{amsfonts}
\usepackage{amsmath}
\usepackage{stackrel}
\usepackage{booktabs,multirow}
\usepackage{arydshln}
\usepackage{color}
\usepackage{amsthm}
\usepackage{listings}
\usepackage{algorithm}
\usepackage{algorithmicx}
\usepackage{algpseudocode}
\usepackage{threeparttable}
\newcommand\numberthis{\addtocounter{equation}{1}\tag{\theequation}}


\newtheorem{lem}{Lemma}

\newtheorem{theo}{Theorem}

\newtheorem{ex}{Example}
\newfloat{routine}{htbp}{loa}
\floatname{routine}{Routine}

\makeatletter
\newcommand{\algmargin}{\the\ALG@thistlm}
\makeatother
\newlength{\forwidth}
\settowidth{\forwidth}{\algorithmicfor\ }
\algdef{SE}[parFOR]{parFor}{EndparFor}[1]
{\parbox[t]{\dimexpr\linewidth-\algmargin}{
		\hangindent\forwidth\strut\algorithmicfor\ #1\ \algorithmicdo\strut}}{\algorithmicend\ \algorithmicfor}
\algnewcommand{\parState}[1]{\State
	\parbox[t]{\dimexpr\linewidth-\algmargin}{\strut #1\strut}}

\newlength{\ifwidth}
\settowidth{\ifwidth}{\algorithmicif\ }
\algdef{SE}[parIF]{parIf}{EndparIf}[1]
{\parbox[t]{\dimexpr\linewidth-\algmargin}{
		\hangindent\ifwidth\strut\algorithmicif\ #1\ \algorithmicdo\strut}}{\algorithmicend\ \algorithmicif}

\hyphenation{op-tical net-works semi-conduc-tor}

\begin{document}

\title{On Computing the Number of  Short Cycles in Bipartite Graphs Using the Spectrum of the Directed Edge Matrix}
\author{Ali Dehghan, and Amir H. Banihashemi,\IEEEmembership{ Senior Member, IEEE}}

\maketitle


\begin{abstract}
Counting short cycles in bipartite graphs is a fundamental problem of interest in many fields including the analysis and design of low-density parity-check (LDPC) codes. There are two computational approaches to count short cycles (with length smaller than $2g$, where $g$ is the girth of the graph) in bipartite graphs. The first approach is applicable to a general (irregular) bipartite graph, and uses the spectrum $\{\eta_i\}$ of the directed edge matrix of the graph to compute the multiplicity $N_k$ of $k$-cycles with $k < 2g$ through the simple equation $N_k = \sum_i \eta_i^k/(2k)$.
This approach has a computational complexity $\mathcal{O}(|E|^3)$, where $|E|$ is number of edges in the graph.  The second approach is only applicable to bi-regular bipartite graphs, and uses the  spectrum $\{\lambda_i\}$ of the adjacency matrix (graph spectrum) and the degree sequences of the graph to compute $N_k$. The complexity of this approach is $\mathcal{O}(|V|^3)$, where $|V|$ is number of nodes in the graph. This complexity is less than that of the first approach, but the equations involved in the computations of the second approach are very tedious, particularly for $k \geq g+6$. In this paper, we establish an analytical relationship between the two spectra $\{\eta_i\}$ and $\{\lambda_i\}$ for bi-regular bipartite graphs. Through this relationship, the former spectrum can be derived from the latter through simple equations. This allows the computation of $N_k$ using $N_k = \sum_i \eta_i^k/(2k)$ but with a complexity of $\mathcal{O}(|V|^3)$ rather than $\mathcal{O}(|E|^3)$.

\begin{flushleft}
\noindent {\bf Index Terms:}
Counting cycles, short cycles, bipartite graphs, Tanner graphs, low-density parity-check (LDPC) codes, bi-regular bipartite graphs,  irregular bipartite graphs, directed edge matrix, girth.

\end{flushleft}

\end{abstract}

\section{introduction}
\label{section01}

Bipartite graphs appear in many fields of science and engineering to represent systems that are described by local constraints on different subsets of variables involved in the description of the system. In such a representation, the nodes on one side of the bipartition represent the variables while the nodes on the other side are representative of the constraints. One example is the Tanner graph representation of low-density parity-check (LDPC) codes, where variable nodes represent the code bits and the constraints are parity-check equations. In the bipartite graph representation of systems, the cycle distribution of the graph often plays an important role in understanding the properties of the system. For example, the performance of LDPC codes, both in waterfall and error floor regions,  is highly dependent on the distribution of short cycles of the Tanner graph~\cite{mao2001heuristic},~\cite{hu2005regular},~\cite{halford2006algorithm},~\cite{xiao2009error},~\cite{MR3071345},~\cite{asvadi2011lowering},~\cite{MR2991821},~\cite{MR3252383},~\cite{HB-CL},~\cite{HB-IT1}.
 
Motivated by this,  in the coding community, there has been a large body of work on the distribution and counting of cycles in bipartite graphs, see, e.g., \cite{halford2006algorithm}, \cite{karimi2012counting}, \cite{karimi2013message}, \cite{dehghan2016new},  \cite{blake2017short}. 

Generally, counting cycles of a given length in a given graph is known to be NP-hard \cite{flum2004parameterized}. The problem remains NP-hard even for the family of bipartite graphs \cite{MR1405031}. There are, in general, two computational approaches to count the number of short cycles in bipartite graphs. The first approach is applicable to any (irregular) bipartite graph, and is described in the following theorem.

\begin{theo}\label{T02} \cite{karimi2012counting}
Consider a bipartite graph $G$ with the directed edge matrix $A_e$, and let $\{\eta_i\}$ be the spectrum of $A_e$. Then, the number of $k$-cycles in $G$ is given by $N_k= \dfrac{\sum_{i} \eta_i^k}{2k}$, for $k < 2g$, where $g$ is the girth of $G$.
\end{theo}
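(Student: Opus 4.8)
The plan is to interpret the powers of the directed edge matrix combinatorially, and then exploit the girth bound $k<2g$ to show that the only configurations these powers can ``see'' are simple cycles. Recall that $A_e$ is indexed by the oriented edges of $G$: every edge $\{u,v\}$ gives rise to two directed edges $e=(u\!\to\! v)$ and $\bar e=(v\!\to\! u)$, and $(A_e)_{e_1,e_2}=1$ exactly when the head of $e_1$ equals the tail of $e_2$ and $e_2\neq\bar e_1$, and equals $0$ otherwise. Since the trace of a matrix is the sum of its eigenvalues and this persists under taking powers, $\sum_i\eta_i^k=\mathrm{tr}(A_e^k)$. Expanding the trace,
\[
\mathrm{tr}(A_e^k)=\sum_{e_0,e_1,\dots,e_{k-1}}(A_e)_{e_0,e_1}(A_e)_{e_1,e_2}\cdots(A_e)_{e_{k-1},e_0},
\]
and by the definition of $A_e$ a term equals $1$ iff $(e_0,e_1,\dots,e_{k-1})$ is a \emph{closed non-backtracking walk} of length $k$ --- consecutive directed edges meet head to tail, no step reverses its predecessor, and the wrap-around step from $e_{k-1}$ back to $e_0$ is also non-backtracking --- and equals $0$ otherwise. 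Hence $\mathrm{tr}(A_e^k)$ counts precisely the closed non-backtracking walks of length $k$ in $G$.

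The crux is the following lemma, which I would isolate and prove first: \emph{if $k<2g$, then every closed non-backtracking walk of length $k$ in $G$ goes once around a simple $k$-cycle}. To see this, represent the walk by its vertex sequence $v_0,v_1,\dots,v_{k-1}$ with indices modulo $k$, and suppose it is not vertex-simple, so $v_i=v_j$ for some $i\neq j$. Among all such pairs choose one minimizing the cyclic gap $a:=(j-i)\bmod k$, and assume without loss of generality $1\le a\le k/2$. Minimality forces the arc $v_i,v_{i+1},\dots,v_j$ to have no repeated interior vertex, so this arc is a closed walk returning to $v_i=v_j$ with no internal repetition, i.e.\ a simple cycle of length $a\le k/2<g$, contradicting the definition of girth. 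Therefore the walk is vertex-simple; being closed it is then a simple $k$-cycle (the non-backtracking condition, including the wrap-around enforced by $A_e$, rules out the only degenerate cases, which occur for $k\le 2$).

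It then remains to double count. Fix a simple $k$-cycle $C$ of $G$; a closed non-backtracking walk of length $k$ going around $C$ is determined by choosing one of the $k$ edges of $C$ as the starting directed edge and one of the two directions of traversal (going around a cycle of length $k\ge g>2$ never backtracks, wrap-around included), which yields $2k$ pairwise distinct such walks. Conversely, by the lemma every closed non-backtracking walk of length $k<2g$ arises in this way from a unique simple $k$-cycle. Consequently $\mathrm{tr}(A_e^k)=2k\,N_k$, and combining with $\mathrm{tr}(A_e^k)=\sum_i\eta_i^k$ gives $N_k=\tfrac{1}{2k}\sum_i\eta_i^k$. (When $k<g$ there is no simple $k$-cycle, and by the same lemma no closed non-backtracking walk of length $k$ either, so both sides vanish and the identity still holds.)

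I expect the lemma to be the only genuine obstacle. The tempting move --- cut a non-simple closed walk at a repeated vertex and recurse on the two shorter closed walks --- fails, because such a sub-walk may backtrack at the cut point, so it need not be non-backtracking and the induction breaks down. Selecting the \emph{globally shortest} cyclic gap between two equal vertices sidesteps this by exhibiting an honest simple cycle of length below $g$ in a single step, and carefully verifying that minimality really does eliminate all interior repetitions is the part of the argument I would want to write out in full detail.
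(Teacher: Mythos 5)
Your argument is correct and is essentially the paper's own proof: identify $\sum_i\eta_i^k=\mathrm{tr}(A_e^k)$ with the number of tailless backtrackless closed walks of length $k$ (counted with a distinguished starting directed edge), and show that for $k<2g$ these are precisely the $2k$ rooted, oriented traversals of the simple $k$-cycles. The paper merely cites \cite{karimi2013message} for the coincidence of TBC walks with cycles of length below $2g$, whereas you prove it directly via the minimal-cyclic-gap argument (where the residual cases $a=1,2$ of the short arc are ruled out by the absence of loops and parallel edges together with the non-backtracking condition, just as you already note for $k\le 2$).
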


The result of Theorem~\ref{T02} follows from the property of $A_e$ that the
number of tailless backtrackless closed (TBC) walks of length $k$ in $G$ 
is equal to $tr(A_e^k)/2k$, where $tr(A_e)$ denotes the trace of $A_e$. This together
with the fact that the set of TBC walks of length less than $2g$ coincides with the set of cycles of the 
same size~\cite{karimi2013message} prove the result. To use Theorem \ref{T02}, one needs 
to calculate the eigenvalues of $A_e$. This has a complexity of $\mathcal{O}(|E|^3)$, where $|E|$ is number of edges in the graph~\cite{MR2460593}.

The second approach, which was introduced by Blake and Lin~\cite{blake2017short} and extended by Dehghan and Banihashemi \cite{dehghan2018spectrum},
uses the spectrum of the adjacency matrix and the degree distribution of the graph. It has a lower complexity of $\mathcal{O}(|V|^3)$, where 
$|V|$ is number of nodes in the graph, but is only applicable to bi-regular bipartite graphs. One drawback of this approach is that the recursive 
equations for calculating $N_k$ are tedious, particularly for values of $k \geq g+6$. The following theorem describes the general calculation of $N_i$, for any $g \leq i \leq 2g-2$, 
and the specifics of the calculation of $N_{g+4}$.

\begin{theo} \cite{dehghan2018spectrum}\label{T009}
	For a  given $(d_v, d_c)$-regular bipartite graph $G$, the number of $i$-cycles, $g \leq i \leq 2g-2$, is given by
	\begin{equation}
	N_i = [\sum_{j=1}^{|V|} \lambda_j^i - \Omega_{i}(d_v,d_c,G) - \Psi_{i}(d_v,d_c,G)]/(2i),
	\end{equation}
	where $\{\lambda_j\}_{j=1}^{|V|}$ is the spectrum of $G$, and $\Omega_{i}(d_v,d_c,G)$ and $\Psi_{i}(d_v,d_c,G)$ 
	are the number of closed cycle-free walks of length $i$ and closed walks with cycle of length $i$ in $G$, respectively. 
	For $i=g+4$, we have
	\begin{align*}
	\dfrac{\Psi_{g+4}(d_v,d_c,G)}{2(g+4)}&= N_{g+2}\times [\frac{g+2}{2}(d_v+d_c)-(g+2)]\\
	&+ N_g \times [\frac{g}{2}(d_v-2)(d_c-1)+\frac{g}{2}(d_c-2)(d_v-1)]\\
	&+ N_g \times \Big( [{{\frac{g}{2}} \choose 2}+\frac{g}{2}](d_v-2)^2+ [{{\frac{g}{2}} \choose 2}+\frac{g}{2}](d_c-2)^2 + (\frac{g}{2})^2(d_v-2)(d_c-2) \Big)\\
	&+ N_g \times \Big({g\choose 2}+2g + (g+2)\times (\frac{g}{2}(d_v-2)+\frac{g}{2}(d_c-2))\Big)\:,
	\end{align*}
	and
	\begin{equation}\label{E29999}
	\Omega_{g+4}(d_v,d_c,G) = n\times S_{d_v,d_c,g+4}+m\times S_{d_c,d_v,g+4}\:,
	\end{equation}
	where $n$ and $m$ are the number of variable and check nodes in $G$, respectively, and $S_{d_v,d_c,g+4}$ ($S_{d_c,d_v,g+4}$) represents the  number of  
	closed cycle-free walks of length $g+4$ from a variable node $v$ (a check node $c$) to itself. (Generating functions are used to compute functions 
	$S_{x,y,i}$ recursively~\cite{blake2017short}.)
\end{theo}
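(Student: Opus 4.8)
The plan is to start from the elementary fact that $\sum_{j=1}^{|V|}\lambda_j^i=\mathrm{tr}(A^i)$, where $A$ is the adjacency matrix of $G$, and that this trace equals the number of closed walks of length $i$ in $G$ (sequences $v_0,e_1,v_1,\dots,e_i,v_i$ with $v_0=v_i$, backtracking allowed). First I would partition these closed walks according to the subgraph (edge support) they trace out. A walk whose support is a tree is \emph{cycle-free}; let $\Omega_i$ be their number. A walk whose support is not a tree contains a cycle, and here I would invoke a structural lemma: for $i\le 2g-2$ the support of any closed walk of length $i$ is unicyclic (one cycle plus pendant trees), and its cycle has length in $\{g,g+2,\dots,i\}$. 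The hypothesis $i\le 2g-2$ is exactly what makes this work — a closed walk covering a theta subgraph or two vertex-disjoint cycles has length at least $2g$ (duplicate-shortest-path and travel-between-and-return arguments, together with the girth bound on the constituent paths), so the cycle space of the support is one-dimensional. With this in hand, the walks containing a cycle split into those whose support is a single $i$-cycle traversed once — there are $2iN_i$ of them, namely $i$ choices of start vertex times $2$ directions — and all the rest, whose number I call $\Psi_i$. Solving $\mathrm{tr}(A^i)=\Omega_i+2iN_i+\Psi_i$ for $N_i$ yields the displayed general formula.

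Next I would pin down $\Omega_i$ via the universal cover. A cycle-free closed walk lives inside a tree and is therefore contractible; conversely — this is the second use of $i\le 2g-2$ — a contractible closed walk of length $i$ is cycle-free, because its lift to the universal cover $\tilde G$ (the infinite $(d_v,d_c)$-biregular tree, since $G$ is bi-regular) has a support tree with at most $i/2\le g-1<g$ edges, so the covering projection is injective on it and the subgraph traced in $G$ is a tree. Hence lifting gives a bijection between the cycle-free closed walks of length $i$ from a fixed variable node of $G$ and the closed walks of length $i$ from the root of the $(d_v,d_c)$-biregular tree; their number is a universal quantity $S_{d_v,d_c,i}$ depending only on $d_v,d_c,i$, computable by the generating-function recursion of~\cite{blake2017short}. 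Summing over the $n$ variable nodes and $m$ check nodes gives $\Omega_i=nS_{d_v,d_c,i}+mS_{d_c,d_v,i}$, which for $i=g+4$ is exactly~\eqref{E29999}.

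The substantive part is the explicit evaluation of $\Psi_{g+4}$. Since $g+4\le 2g-2$ forces $g\ge 6$, the structural lemma says each walk counted by $\Psi_{g+4}$ has unicyclic support whose cycle $C$ has length $g$ or $g+2$ (not $g+4$, as that would be a single $(g+4)$-cycle, already subtracted). I would organize the count by $|C|$ and by the ``shape'' of the excess of length $g+4-|C|$. If $|C|=g+2$, the excess has length $2$: the walk goes once around $C$ and makes one out-and-back whisker from a vertex of $C$ to a neighbour; counting whisker positions and targets, using that $C$ has $(g+2)/2$ variable and $(g+2)/2$ check vertices, and dividing by the over-count factor $2(g+4)$ produces $N_{g+2}\bigl[\tfrac{g+2}{2}(d_v+d_c)-(g+2)\bigr]$. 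If $|C|=g$, the excess has length $4$, i.e.\ two pendant edges, and I would enumerate their finitely many configurations — two whiskers at two distinct vertices of $C$, two whiskers at one vertex of $C$, a pendant path of length two, and the degenerate patterns in which an extra step runs along a cycle edge. For each configuration the number of realizing walks (up to the $2(g+4)$ over-count) factors as a positional count — how many vertices, or ordered or unordered pairs of vertices of $C$ of each parity host it, which is where $\tfrac g2$, $\binom{g/2}{2}+\tfrac g2$, $(\tfrac g2)^2$, $\binom g2+2g$ come from — times a degree count such as $(d_v-2)$, $(d_c-1)$, $(d_v-2)^2$, $(d_v-2)(d_c-2)$ recording the remaining choices for each pendant vertex once forbidden edges are excluded. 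Summing over configurations and multiplying by $N_g$ gives the three $N_g$-terms in the statement. (The recursion bottoms out cleanly: $\Psi_g=0$ and $\Psi_{g+2}=2(g+2)N_g\times(\text{whisker count})$, so $N_g$ and $N_{g+2}$ are already available from the general formula at smaller indices.)

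The main obstacle is clearly the case analysis for the $|C|=g$ part of $\Psi_{g+4}$. One must be exhaustive about the shapes of a length-$4$ excess hung on a $g$-cycle; must use the girth to certify that the pendant vertices are distinct from each other and from $C$, so the configurations are genuinely tree-like rather than accidentally shorter cycles (a vertex off $C$ adjacent to two vertices of $C$ would create a cycle of length at most $g/2+2<g$); must get every degree-exclusion right, since whether a factor is $(d_v-1)$ or $(d_v-2)$ depends delicately on which earlier edges are forbidden at that step; and must justify the global division by $2(g+4)$, i.e.\ show that each unlabelled ``cycle $+$ decoration $+$ traversal pattern'' accounts for exactly $2(g+4)$ closed walks, with the rare symmetric patterns either not arising in this length range or cancelling — which is precisely what licenses stating the result as $\Psi_{g+4}/(2(g+4))=\cdots$. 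Everything past that point is routine bookkeeping.
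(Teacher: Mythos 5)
Note first that this theorem is not proved in the paper at all: it is imported from \cite{dehghan2018spectrum} (with the cycle-free walk counts $S_{x,y,i}$ coming from \cite{blake2017short}), so there is no in-paper proof to compare against. Your overall strategy coincides with that of the cited works: interpret $\sum_j\lambda_j^i=\mathrm{tr}(A^i)$ as the number of closed walks of length $i$, use $i\le 2g-2$ to show every such walk has tree or unicyclic support, split the total into $2iN_i$ cycle traversals, $\Omega_i$ cycle-free walks and $\Psi_i$ remaining walks, and evaluate $\Omega_i$ per vertex by walks on the infinite $(d_v,d_c)$-biregular tree. These parts of your argument are sound: the bound showing a closed walk whose support contains a theta subgraph or two cycles has length at least $2g$ is correct, and the universal-cover argument that the cycle-free closed-walk count from a node depends only on $(d_v,d_c,i)$ is a clean way to justify $\Omega_{g+4}=nS_{d_v,d_c,g+4}+mS_{d_c,d_v,g+4}$.

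The gap is that you never actually establish the displayed expression for $\Psi_{g+4}/(2(g+4))$, which is the substantive content of the statement. You correctly observe that the relevant walks have constituent cycle length $g+2$ (excess $2$) or $g$ (excess $4$) and you name the configuration types, but the proof consists precisely of (i) an exhaustive enumeration of those configurations, including the degenerate ones that reuse cycle edges; (ii) verification of each positional coefficient, e.g.\ $\tfrac{g+2}{2}(d_v+d_c)-(g+2)$, $\binom{g/2}{2}+\tfrac g2$, $\bigl(\tfrac g2\bigr)^2$, $\binom g2+2g$, and the factor $(g+2)$ on the mixed term, together with each degree-exclusion factor (when it is $d_v-1$ versus $d_v-2$); and (iii) a proof that every decorated configuration is traced by exactly $2(g+4)$ closed walks, which for the edge-repeating patterns requires counting traversal orderings and not merely start vertex and direction. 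You explicitly defer all three items as ``the main obstacle'' or ``routine bookkeeping,'' so what you have is an outline of the right attack rather than a proof: none of the specific coefficients in the $\Psi_{g+4}$ identity is derived or checked, and it is exactly in that bookkeeping that such formulas stand or fall.
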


In this work, we investigate the relationship between the above two approaches. In particular, our goal is to find the relationship between the two spectra $\{\eta_i\}$ and 
$\{\lambda_i\}$ for bi-regular bipartite graphs. We show that the former spectrum includes eigenvalues $\pm 1$, $\pm \sqrt{-(d_v-1)}$, and $\pm \sqrt{-(d_c-1)}$. 
The remaining eigenvalues of $A_e$ are related to  the graph spectrum $\{\lambda_i\}$ through simple quadratic equations whose 
coefficients are determined by the node degrees $d_v$ and $d_c$. This allows one to compute $N_k$ using Theorem~\ref{T02}, but through the calculation of
the graph spectrum $\{\lambda_i\}$ rather than the direct calculation of $\{\eta_i\}$. As a result, the computational complexity reduces to $\mathcal{O}(|V|^3)$ rather than $\mathcal{O}(|E|^3)$, while avoiding the tedious equations of Theorem~\ref{T009}.

The organization of the rest of the paper is as follows: In Section~\ref{section02}, we present some definitions and notations.
Section~\ref{section04} contains our result on the relationship between the two spectra $\{\lambda_i\}$ and $\{\eta_i\}$,
and the derivation of the latter from the former. The paper is concluded in Section~\ref{section06}.

\section{Definitions and notations}
\label{section02}

A graph $G=(V,E)$ is a  set $V(G)$ of nodes and a multiset $E(G)$ of unordered pairs of nodes, called edges. If $\{v,u\}\in E$, we say that 
there is an edge between $v$ and $u$ (i.e., $v$ and $u$ are adjacent). We may also use notations $uv$ or $vu$ for the edge $\{v,u\}$. 
We say that a graph $G$ is simple, if it does not 
have any loop (i.e., no edge of the form $\{v,v\}$) or parallel edges (i.e.,  no two edges between the two same nodes).
A directed graph (digraph) $D=(V,E)$ is a  set $V$ of nodes and a  multiset $E$
of ordered pairs of nodes called arcs. For an arc $e = (u,w)$, we define the origin of $e$ to be $o(e) = u$, and the terminus of $e$ 
to be $t(e) = w$. The inverse arc of $e$, denoted by $\overline{e}$, is the arc formed by switching the
origin and terminus of $e$.  A digraph $D$ is called {\em symmetric} if whenever $(u,w)$ is an arc of $D$, its inverse arc $(w, u)$ is as well. 
For each graph $G$, its symmetric digraph $D(G)$ is defined by replacing each edge of $G$ with two arcs in opposite directions. See Fig. \ref{fig001}.
Thus, there is a simple correspondence between $G$ and $D(G)$.

\begin{figure}[ht]
	\begin{center}
		\includegraphics[scale=.5]{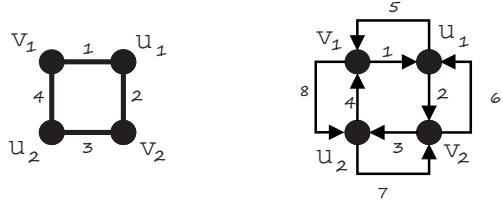}
		\caption{ A graph $G$ and its symmetric digraph $D(G)$.
		} \label{fig001}
	\end{center}
\end{figure}

In a graph $G$, the number of edges incident to a node $v$ is called the {\em degree} of $v$, and is denoted by $d(v)$.
Also, $\Delta(G)$ and $ \delta(G)$ are used to denote  the
maximum and minimum degree of $G$. For every node $v \in V (G)$, the set $N(v)$
denotes the set of neighbors of $v$ in $G$.

For a  graph $G$, a {\it walk} of length $c$ is a sequence of nodes $v_1, v_2, \ldots , v_{c+1}$ in $V$ such that $\{v_i, v_{i+1}\} \in E$, for all $i \in \{1, \ldots , c\}$. 
A walk can alternatively be represented by its sequence of edges.
A walk $v_1, v_2, \ldots , v_{k+1}$ is a {\it path} if all the nodes $v_1, v_2, \ldots , v_k$ are distinct. A walk is called a
{\it closed walk}  if the two end nodes are the same, i.e., if $v_1 = v_{k+1}$. Under the same condition, a path is called a {\it cycle}.
We denote cycles of length $k$, also referred to as $k$-cycles, by $C_k$. We use $N_k$ for $|C_k|$.  The length of the shortest cycle(s) 
in a graph is called {\em girth} and is denoted by $g$.

Consider a walk ${\cal W}$ of length $k$ represented by the sequence of edges $e_{i_1}, e_{i_2}, \ldots , e_{i_k}$. The
walk ${\cal W}$ is {\it backtrackless} if $e_{i_s}  \neq e_{i_{s+1}}$, for any $s \in\{1, \ldots , k-1\}$.
Also, the walk ${\cal W}$ is {\it tailless} if $e_{i_1}  \neq e_{i_{k}}$. In
this paper, we  use the term {\it TBC walk} to refer to a tailless backtrackless closed walk.  


A graph $G$ is {\it connected}, if there is a path between any two nodes of $G$. 
A graph $G=(V,E)$ is called {\it bipartite}, if the node set $V$ can be partitioned into two disjoint subsets $U$ and $W$, i.e., $V = U \cup W \text{ and } U \cap W =\emptyset $, such that every edge in $E$ connects a node from $U$ to a node from $W$. A graph is bipartite if and only if the lengths of all its cycles are even.
Tanner graphs of LDPC codes are bipartite graphs, in which $U$ and $W$ are referred to as {\it variable nodes} and {\it check nodes}, respectively. 
Parameters $n$ and $m$ in this case are used to denote $|U|$ and $|W|$, respectively. Parameter $n$ is the code's block length and the code rate 
$R$ satisfies $R \geq 1- (m/n)$. 

A bipartite graph $G = (U\cup W,E)$ is called {\it bi-regular}, if all the nodes on the same side of the bipartition have the same degree,
i.e., if all the nodes in $U$ have the same degree $d_u$ and all the nodes in $W$ have the same degree $d_w$. In the rest of the paper, we sometimes use 
notations $d_v$ and $d_c$ as a replacement for $d_u$ and $d_w$, respectively, to follow the notations commonly used in coding to denote
variable and check node degrees, respectively.
It is clear that, for a bi-regular graph, $|U|d_u=|W|d_w=|E(G)|$. A bipartite graph that is not bi-regular is called {\it irregular}.
A bipartite graph $G(U \cup W, E)$ is called {\em complete}, and is denoted by $K_{|U|,|W|}$, if every node in $U$ is connected to every node in $W$. 
The {\it degree sequences} of a bipartite graph $G$ are defined as the two monotonic non-increasing sequences of the node degrees on the two sides of the graph. 
For instance, the complete bipartite graph $K_{3,4}$ has degree sequences $(4,4,4)$ and $(3,3,3,3)$.

The {\it adjacency matrix} of a graph $G$ is a $|V| \times |V|$ matrix $A = [a_{ij}]$, where $a_{ij}$ is the number of edges connecting the node $i$ to the node
$j$, for all $i, j\in V$. Similarly, The   adjacency matrix  of a digraph $D$ is the matrix $A_D = [b_{ij}]$, where $b_{ij}$ is one if and only if 
$(i, j) \in E(D)$. The adjacency matrix $A$ is symmetric, and since we assumed that $G$ has no parallel edges, then $a_{ij}\in\{0, 1\}$, for all $i, j\in V$. Moreover, since $G$ has no loops, then $a_{ii} = 0$, for all $i \in V$.

An {\it eigenvalue} of $A$ is a number $\lambda $ such that $A\overrightarrow{v}=\lambda  \overrightarrow{v}$, for some nonzero vector $\overrightarrow{v}$. 
(Throughout the paper all vectors are assumed to be column vectors.)
The vector $\overrightarrow{v}$ is then called an {\it eigenvector} of $A$.
The set of the eigenvalues $\{\lambda_i\}$ of the adjacency matrix $A$ of a graph $G$ is called the {\em spectrum} of $G$. 
The determinant $\det(\lambda I- A)$, where $I$ is the identity matrix, is called the {\em characteristic polynomial} of $A$ (with variable $\lambda$).
The roots of this polynomial are the eigenvalues of $A$. An eigenvalue $ \lambda'$ of $A$ is said to have multiplicity $i$ if, when the characteristic polynomial is
factorized into linear factors, the factor $(\lambda - \lambda ')$ appears $i$ times. 
If $\lambda$ is an eigenvalue of $A$, then the subspace $ \{  \overrightarrow{v}: A\overrightarrow{v} = \lambda \overrightarrow{v} \}$ is 
called the {\em eigenspace} of $A$ associated with $\lambda$. The dimension of this eigensapce is at most the multiplicity of $\lambda$. 

There are some known results about the eigenvalues and eigenvectors of the adjacency matrix $A$ that we review below and use them in our work  (see, e.g.,~\cite{horn1990matrix}). (1) If $\lambda$ is an eigenvalue of $A$, then $\lambda^2 $ is an eigenvalue of $A^2$. (2) [Perron-Frobenius, Symmetric Case] Let $A$ be the adjacency 
matrix of a connected graph $G$, and let $ \lambda_1 \geq \lambda_2 \geq \ldots \geq \lambda_{|V|}$ be the spectrum of $G$. Then,  $ \lambda_1 > \lambda_2$ (i.e., 
the multiplicity of the largest eigenvalue of $A$ is one). (3) The largest eigenvalue of bi-regular bipartite graphs is $\sqrt{d_v d_c}$~\cite{MR2599858}. 
(4) A graph is bipartite if and only if its spectrum is symmetric about the origin.  (5) By Properties (2) and (4), in connected bipartite graphs,  
the multiplicity of the smallest eigenvalue is also one.
(6) By Property (4), for a given bipartite graph $G$, if  $\lambda_i$ is an eigenvalue of $A$ with multiplicity $m_i$, then $-\lambda_i$ is also an eigenvalue with multiplicity $m_i$.
Thus, the spectrum of $A$ has the following form $\{\pm \lambda_1^{m_1}, \ldots, \pm \lambda_r^{m_r}\}$, for some $r \geq 1$, and we have $\sum_{i=1}^r 2\times m_i=|V|$.
(7) The adjacency matrix $A$ of $G$ has $|V(G)|$ linearly independent eigenvectors, such that for each $1 \leq i \leq r$, there
are $m_i$ linearly independent eigenvectors associated with each eigenvalue $\lambda_i$ and $-\lambda_i$.


Another important property of the adjacency matrix is that the number of walks
between any two nodes of the graph can be determined using the powers of this matrix. In other words, the entry in
the $i^{\text{th}}$ row and the $j^{\text{th}}$ column of $A^k$, $[A^k]_{ij}$ , is the number of walks of length $k$ between nodes $i$ and $j$. Consequently, the total number of closed walks of length $k$ in $G$ is  $tr(A^k)$, where $tr(\cdot)$ is the trace of a matrix. 
It is well-known that $tr(A^k)= \sum_{i=1}^{|V|}\lambda_i^k$, and thus the multiplicity of closed walks of different length 
in a graph can be obtained using the spectrum of the graph.

For a given graph $G$, the directed edge matrix $A_e$, is a $2|E|\times 2|E|$ matrix defined as follows. For each edge $e_i=\{v,u\}$ in $G$, 
we consider two opposite arcs $(v,u),(u,v)$, and denote them by $f_i$ and $f_{|E(G)|+i}$ (i.e., $ f_i = \overline{f_{|E(G)|+i}}$). We then define
 
  \begin{equation}\label{EEEE1}
 	 (A_e)_{i,j}=
 	\begin{cases}
 	1,      &\text{if }\,\,t(f_i)=o(f_j) \text{ and } f_i \neq \overline{f_j}\\
 	0,       & \text{otherwise}.\
 	\end{cases} 
\end{equation}
 
In other words, for a given graph $G$, we consider its associated symmetric digraph $D(G)$, and then calculate $A_e$ from $D(G)$ using (\ref{EEEE1}). 
For example, for graphs $G$ and $D(G)$ in Fig. \ref{fig001}, we have

\begin{center}
	$A_e=    
	\left[
	\begin{array}{cccc|cccc}
	0 & 1 & 0 & 0 & 0 & 0 & 0  & 0 \\
	0 & 0 & 1 & 0 & 0 & 0 & 0  & 0 \\
	0 & 0 & 0 & 1 & 0 & 0 & 0  & 0 \\
	1 & 0 & 0 & 0 & 0 & 0 & 0  & 0 \\
	\hline
	0 & 0 & 0 & 0 & 0 & 0 & 0  & 1 \\
	0 & 0 & 0 & 0 & 1 & 0 & 0  & 0 \\
	0 & 0 & 0 & 0 & 0 & 1 & 0  & 0 \\
	0 & 0 & 0 & 0 & 0 & 0 & 1  & 0 \\	
	\end{array}
	\right]
	$\:.
\end{center}

The number of $k$-cycles, $g \leq k \leq 2g-2$, in a bipartite graph $G$ can be obtained from the spectrum $\{\eta_i\}$ of $A_e$ using Theorem~\ref{T02}.



The rank of a matrix $B$, denoted by $Rank(B)$, is the dimension of the vector space generated  by its columns. This corresponds to the maximum number of linearly independent columns of $A$. The rank is also the dimension of the space spanned by the rows of $B$. Thus, if $B$ is an $m\times n$ matrix, then
\begin{equation}\label{EE99}
Rank(B)=Rank(B^t) \leq \min\{m,n\}\:,
\end{equation}
where $B^t$  is the transpose of $B$. The kernel (null space) of a matrix $B$ is the set of solutions to the equation $B \overrightarrow{x} = \overrightarrow{0}$, 
where $\overrightarrow{0}$ is the zero vector. The dimension of the null space of $B$ is called the nullity of $B$ and is denoted by $Null(B)$. 
For an $m \times n$ matrix $B$, we have (Rank-Nullity Theorem):
\begin{equation}
Rank(B)+Null(B)=n\:.
\end{equation}

\section{The Relationship between the Spectra of $A_e$ and $A$ for Bi-regular Bipartite Graphs, and the New Method to Count Short Cycles}
\label{section04}


In \cite{MR2794074}, it was shown that for a regular graph $G$, the eigenvalues of $A_e$ can be computed from those of $A$.
A key component in the derivations of \cite{MR2794074} is the special properties that $A_e$ has as a result of the regularity of the graph. 
For the bi-regular graphs, considered in this work, however, such properties do not exist and thus the derivations are much different.
In this section, we derive the spectrum $\{\eta_i\}$ of $A_e$ from the graph spectrum $\{\lambda_i\}$ for bi-regular bipartite graphs,
and then use the results to count the short cycles of the graph by Theorem~\ref{T02}. 

To derive our results, we first define an auxiliary matrix $\widetilde{A}$ as a function of $A$. We then find the eigenvalues $\{\xi_i\}$ of $\widetilde{A}^2$, which are on the one hand 
related to  $\{\lambda_i\}$, and on the other hand to $\{\eta_i\}$. Through these relationships, we derive $\{\eta_i\}$ from $\{\lambda_i\}$. In the following, for simplicity, we use notations $q_1$ and $q_2$ to denote $d_v-1$ and $d_c-1$, respectively.

For a bi-regular bipartite graph $G=(U \cup W, E)$, let $ \widetilde{A}=[\widetilde{a}_{(u,w),(x,y)}]_{u,w,x,y\in V(G)}$ be a $|V(G)|^2\times |V(G)|^2$ 
matrix such that the entries of $\widetilde{A}$ are given by
\begin{equation}\label{E001}
\widetilde{a}_{(u,w),(x,y)} =a_{uw}a_{xy} \delta_{wx}(1- \delta_{ uy})\:,
\end{equation}
where $\delta_{uw}$ is the Kronecker delta (which is equal to $1$ if $u=w$, and equal to zero, otherwise), and $a_{uw}$ is the $(u,w)^{th}$ entry of the adjacency matrix $A$ of $G$.
In the rest of the paper, we assume that the rows and columns of $\widetilde{A} $ are sorted in the following order: 
First, the set $\{(u,w): u\in U, w\in W, uw\in E(G)\}$, second  $\{(w,u): u\in U, w\in W, uw\in E(G)\}$, and finally, 
other pairs $\{(u,w), (w,u): u\in U, w\in W, uw \notin E(G)\}$. 
Note that the union of the first two sets is the set of directed edges in the symmetric digraph $D(G)$ associated with $G$. 
Also, by (\ref{E001}), $\widetilde{a}_{(u,w),(x,y)}=1$ if and only if we have
\begin{itemize}
\item [(i)] $a_{uw}a_{xy}=1$ (i.e., $f_i=(u,w), f_j=(x,y) \in  E(D(G))$),
\item [(ii)] $\delta_{wx}=1$ (i.e., $t(f_i) = o(f_j)$), and 
\item [(iii)] $(1- \delta_{ uy})=1$ (i.e., $f_i \neq \overline{f_j})$).
\end{itemize}
Thus, by  (\ref{EEEE1}), 
 the matrix $\widetilde{A}$ has the following form
\begin{equation}\label{EEE1}
\widetilde{A}=\left[
\begin{array}{c|l}
A_e & 0_{(2|E|)\times(|V|^2-2|E|)} \\
\hline
0_{(|V|^2-2|E|)\times(2|E|)} & 0_{(|V|^2-2|E|)\times(|V|^2-2|E|)}   \\	
\end{array}
\right]\:,
\end{equation}
and by (\ref{EEE1}), we have the following result.

\begin{lem}
The eigenvalues of $\widetilde{A}$ are the same as those of $A_e$ with the addition of $|V|^2-2|E| $ zero eigenvalues.
\label{lema1}
\end{lem}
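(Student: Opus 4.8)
The plan is to read the result directly off the block decomposition (\ref{EEE1}) that has just been established. Since $\widetilde{A}$ has the block form
\begin{equation*}
\widetilde{A}=\left[
\begin{array}{c|c}
A_e & 0 \\
\hline
0 & 0   \\
\end{array}
\right]\:,
\end{equation*}
with the lower-right zero block of size $(|V|^2-2|E|)\times(|V|^2-2|E|)$, it is block diagonal, and hence its characteristic polynomial factors as the product of the characteristic polynomials of the two diagonal blocks.

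Concretely, I would write $\det(\lambda I_{|V|^2} - \widetilde{A}) = \det(\lambda I_{2|E|} - A_e)\cdot \det(\lambda I_{|V|^2-2|E|})= \lambda^{|V|^2-2|E|}\,\det(\lambda I_{2|E|} - A_e)$, using the standard fact that the determinant of a block triangular (here block diagonal) matrix is the product of the determinants of its diagonal blocks. Reading off the roots with multiplicity then gives exactly the eigenvalues of $A_e$ together with $|V|^2-2|E|$ extra zero eigenvalues, which is the claim.

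There is essentially no obstacle here; the only thing to make sure of is that (\ref{EEE1}) genuinely holds, i.e. that the ordering of rows and columns imposed just before the statement — first the directed edges $(u,w)$, then their reverses $(w,u)$, then the non-edge pairs — really does place all nonzero entries of $\widetilde{A}$ in the top-left $2|E|\times 2|E|$ block and makes that block equal to $A_e$. This is the content of conditions (i)–(iii) above combined with definition (\ref{EEEE1}): condition (i) forces both $(u,w)$ and $(x,y)$ to be directed edges, so any nonzero entry lies in the first $2|E|$ rows and columns, and then (ii)–(iii) reproduce precisely the defining conditions of $(A_e)_{i,j}$. Once that identification is granted, the eigenvalue statement is immediate from the determinant factorization, so the proof is just a one- or two-line argument.
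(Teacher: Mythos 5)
Your proof is correct and matches the paper's approach: the paper states this lemma as an immediate consequence of the block form (\ref{EEE1}), and your determinant factorization $\det(\lambda I - \widetilde{A}) = \lambda^{|V|^2-2|E|}\det(\lambda I - A_e)$ simply makes that one-line argument explicit.
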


Furthermore, since $G$ is bipartite, and based on the labeling of rows and columns (i.e.,  first, are listed pairs $\{(u,w): u\in U, w\in W, uw\in E(G)\}$, followed by pairs 
$\{(w,u): u\in U, w\in W, uw\in E(G)\}$), $A_e$ has the following form
\begin{equation}\label{EEE78}
A_e=\left[
\begin{array}{l|l}
0_{|E|\times|E|} & B_e \\
\hline
C_e             & 0_{|E|\times|E|}   \\	
\end{array}
\right]\:,
\end{equation}
where $B_e$  and $C_e$ are $ |E|\times  |E|$ matrices. As an example, by the ordering just described ($(u_1,v_1), (u_1,v_2), (u_2,v_2), (u_2,v_1)$ are the first $4$ arcs, followed by their inverse arcs in the same order), for the graph $G$ shown in Fig. \ref{fig001}, we have

\begin{center}
	$A_e=    
	\left[
	\begin{array}{cccc|cccc}
	0 & 0 & 0 & 0 & 0 & 0 & 0  & 1 \\
	0 & 0 & 0 & 0 & 0 & 0 & 1  & 0 \\
	0 & 0 & 0 & 0 & 0 & 1 & 0  & 0 \\
	0 & 0 & 0 & 0 & 1 & 0 & 0  & 0 \\
	\hline
	0 & 1 & 0 & 0 & 0 & 0 & 0  & 0 \\
	1 & 0 & 0 & 0 & 0 & 0 & 0  & 0 \\
	0 & 0 & 0 & 1 & 0 & 0 & 0  & 0 \\
	0 & 0 & 1 & 0 & 0 & 0 & 0  & 0 \\	
	\end{array}
	\right]\:.
	$
\end{center}

From (\ref{EEE1}) and (\ref{EEE78}), one can see that the matrix $\widetilde{A}^2$ has the following form:
\begin{equation}\label{E003}
\widetilde{A}^2=\left[
\begin{array}{c|l}
\begin{array}{l|l}
B_e C_e & 0_{|E|\times|E|}  \\
\hline
0_{|E|\times|E|}            & C_e B_e  \\	
\end{array} & 0_{(2|E|)\times(|V|^2-2|E|)} \\
\hline
0_{(|V|^2-2|E|)\times(2|E|)} & 0_{(|V|^2-2|E|)\times(|V|^2-2|E|)}   \\	
\end{array}
\right]\:,
\end{equation}
or equivalently,
\begin{equation}\label{E0033}
\widetilde{A}^2=\left[
\begin{array}{c|l}
A_e^2 & 0_{(2|E|)\times(|V|^2-2|E|)} \\
\hline
0_{(|V|^2-2|E|)\times(2|E|)} & 0_{(|V|^2-2|E|)\times(|V|^2-2|E|)}   \\	
\end{array}
\right]
\end{equation}
It is easy to see that $((u,w),(x,y))^{th}$ entry of the  element of $\widetilde{A}^2 $ (denoted by $\widetilde{a}^2_{(u,w),(x,y)}$) is given by
\begin{equation}\label{E006}
\widetilde{a}^2_{(u,w),(x,y)}=
\begin{cases}
1,       & \text{if}\,\,uw,wx,xy\in E,\,\, x\neq u, \,\, y\neq w\\
0,       & \text{otherwise}.\
\end{cases}
\end{equation}
We thus have 
\begin{equation}\label{E007}
\widetilde{a}^2_{(u,w),(x,y)}=
a_{uw}a_{wx}a_{xy}(1-\delta_{xu})(1-\delta_{yw})\:.
\end{equation}

Next, we study the structure of eigenvectors of $\widetilde{A}^2$.

\begin{lem}
    Consider a number $\xi \neq 0$ and a vector $\overrightarrow{\phi}$ of size $|V|^2$, and 
	denote the element that corresponds to the pair $(x,y)$ in the vector $\overrightarrow{\phi}$ by $\phi_{(x,y)}$.
	Then,  $\overrightarrow{\phi}$ is an eigenvector of $\widetilde{A}^2$ associated with eigenvalue $\xi$ if and only if,
	for each pair $(u,w)$, where $u\in U$ and $w\in W$, we have
	\begin{equation}\label{E004}
	\xi \phi_{(u,w)}=a_{uw}\sum_{x\in U} a_{wx}\sum_{y\in W} a_{xy} \phi_{(x,y)}
	-a_{uw}\sum_{x\in U} a_{wx}  \phi_{(x,w)}
	-a_{uw}\sum_{y\in W} a_{uy}   \phi_{(u,y)}
	+a_{uw}\phi_{(u,w)}\:,
	\end{equation}
	and for each pair $(w,u)$, where  $w\in W$ and $u\in U$, we have
	\begin{equation}\label{EE001}
	\xi \phi_{(w,u)}=a_{wu}\sum_{y\in W} a_{uy}\sum_{x\in U} a_{yx} \phi_{( y,x)}
	-a_{wu}\sum_{y\in W} a_{uy}  \phi_{(y,u)}
	-a_{wu}\sum_{x\in U} a_{wx}   \phi_{(w,x)}
	+a_{wu}\phi_{(w,u)} \:,
	\end{equation}
	and for all the other pairs $(x,y)$, $\phi_{(x,y)}=0$.
	\label{lemc}
\end{lem}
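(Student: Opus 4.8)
The plan is to write out the equation $\widetilde{A}^2\overrightarrow{\phi} = \xi\overrightarrow{\phi}$ one coordinate at a time, using the explicit formula \eqref{E007} for the entries of $\widetilde{A}^2$, and to simplify each coordinate equation using bipartiteness of $G$ together with the facts that $A$ is symmetric and has $0/1$ entries (so $a_{uw}=a_{wu}$ and $a_{uw}^2=a_{uw}$).

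First I would fix a pair $(u,w)$ with $u\in U$ and $w\in W$ and write the corresponding scalar equation as $\xi\phi_{(u,w)} = \sum_{x,y} a_{uw}a_{wx}a_{xy}(1-\delta_{xu})(1-\delta_{yw})\phi_{(x,y)}$, the sum running over all ordered pairs $(x,y)$ of vertices of $G$. Expanding $(1-\delta_{xu})(1-\delta_{yw}) = 1 - \delta_{xu} - \delta_{yw} + \delta_{xu}\delta_{yw}$ splits the right-hand side into four sums. In the first sum, $a_{wx}\neq 0$ forces $x\in N(w)\subseteq U$ and then $a_{xy}\neq 0$ forces $y\in N(x)\subseteq W$, so it equals $a_{uw}\sum_{x\in U}a_{wx}\sum_{y\in W}a_{xy}\phi_{(x,y)}$. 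In the second and third sums the Kronecker deltas collapse one summation index each, and after using $a_{uw}a_{wu}=a_{uw}$ and $a_{wx}a_{xw}=a_{wx}$ they become $-a_{uw}\sum_{y\in W}a_{uy}\phi_{(u,y)}$ and $-a_{uw}\sum_{x\in U}a_{wx}\phi_{(x,w)}$; the fourth sum collapses to $a_{uw}\phi_{(u,w)}$. Adding the four contributions reproduces \eqref{E004} exactly. The computation for a pair $(w,u)$ with $w\in W$ and $u\in U$ is identical after interchanging the two sides of the bipartition, and yields \eqref{EE001}.

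It then remains to treat the coordinates indexed by a pair $(x,y)$ with $x$ and $y$ on the same side of the bipartition. By \eqref{E001}, the row of $\widetilde{A}$ labelled by $(x,y)$ vanishes whenever $a_{xy}=0$, which holds for every same-side pair; hence the corresponding row of $\widetilde{A}^2$ is zero, so that coordinate equation reads $\xi\phi_{(x,y)}=0$, and since $\xi\neq 0$ we obtain $\phi_{(x,y)}=0$. (The same observation shows that for $u\in U$, $w\in W$ with $uw\notin E$, equation \eqref{E004} degenerates to $\phi_{(u,w)}=0$, consistent with such a pair labelling a zero row; likewise for \eqref{EE001}.) Collecting the three cases shows that $\widetilde{A}^2\overrightarrow{\phi}=\xi\overrightarrow{\phi}$ is equivalent to the conjunction of \eqref{E004} over all pairs $(u,w)$, \eqref{EE001} over all pairs $(w,u)$, and $\phi_{(x,y)}=0$ over all remaining pairs, which is the claim. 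I do not anticipate a genuine obstacle here; the only step requiring care is the expansion of the product of Kronecker deltas and the bookkeeping needed to restrict the summation ranges via bipartiteness and to clear the redundant adjacency factors via $a_{uw}^2=a_{uw}$.
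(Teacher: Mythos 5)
Your proposal is correct and follows essentially the same route as the paper's proof: expand the coordinate equations of $\widetilde{A}^2\overrightarrow{\phi}=\xi\overrightarrow{\phi}$ via (\ref{E007}), split the product $(1-\delta_{xu})(1-\delta_{yw})$ into four sums that collapse to the terms of (\ref{E004}) and (\ref{EE001}), and use $\xi\neq 0$ to force $\phi_{(x,y)}=0$ on same-side pairs. Your extra remarks about non-adjacent cross pairs and the idempotence of the adjacency entries are just a more explicit bookkeeping of steps the paper leaves implicit.
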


\begin{proof}
By the definition of eigenvalue/eigenvector and (\ref{E007}), it is clear that for $\xi \neq 0$, we must have $\phi_{(x,y)}=0$, for all cases 
where nodes $x$ and $y$ are on the same side of the graph. On the other hand, for each pair $(u,w)$, where $u\in U$ and $w\in W$, 
by the definition of eigenvalue/eigenvector and (\ref{E007}), we have:
	\begin{align*}
	\xi \phi_{(u,w)}&=\sum_{x\in U}\sum_{y\in W} a_{uw}a_{wx}a_{xy}(1-\delta_{xu})(1-\delta_{yw})\phi_{(x,y)}\\
	&=a_{uw}\sum_{x\in U}a_{wx}\sum_{y\in W} a_{xy}(1-\delta_{xu})(1-\delta_{yw})\phi_{(x,y)}\\
	&=a_{uw}\sum_{x\in U} a_{wx}\sum_{y\in W} a_{xy} \phi_{(x,y)}
	-a_{uw}\sum_{x\in U} a_{wx}  \phi_{(x,w)}
	-a_{uw}\sum_{y\in W} a_{uy}   \phi_{(u,y)}
	+a_{uw}\phi_{(u,w)}
	\end{align*} 
	Equation (\ref{EE001}) is derived similarly.
\end{proof}

\subsection{From the non-zero eigenvalues of $A$ to the eigenvalues of $\widetilde{A}^2$}

\begin{lem}
Let  $\lambda \neq 0$ be an eigenvalue of the adjacency matrix $A$. Then the solutions of the quadratic equation 
$\xi^2 +(-\lambda^2+q_1+q_2)\xi+q_1q_2=0$ are two eigenvalues of  $\widetilde{A}^2$.
\end{lem}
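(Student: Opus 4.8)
The plan is to produce, for each root $\xi$ of the quadratic, an explicit eigenvector of $\widetilde{A}^2$ with eigenvalue $\xi$, built from an eigenvector of $A$ for $\lambda$, and then invoke Lemma~\ref{lemc}. Fix a nonzero $\overrightarrow{v}=(v_z)_{z\in V(G)}$ with $A\overrightarrow{v}=\lambda\overrightarrow{v}$. Since $G$ is bipartite and $\lambda\neq 0$, the vector $\overrightarrow{v}$ cannot vanish on either side of the bipartition: if, say, $v_z=0$ for all $z\in U$, then for every $w\in W$ we get $\lambda v_w=(A\overrightarrow{v})_w=\sum_{z\in U}a_{zw}v_z=0$, forcing $\overrightarrow{v}=\overrightarrow{0}$. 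For scalars $\alpha,\beta$ still to be chosen, define a length-$|V|^2$ vector $\overrightarrow{\phi}$ by $\phi_{(u,w)}=\alpha v_u+\beta v_w$ when $u\in U$, $w\in W$, $uw\in E(G)$, and $\phi_{(x,y)}=0$ for every other pair (in particular on all $(W,U)$-pairs and on all same-side pairs). With this choice the vanishing conditions of Lemma~\ref{lemc} and equation~(\ref{EE001}) hold trivially (both sides reduce to $0$), so it remains to arrange (i)~$\xi\neq 0$, (ii)~$\overrightarrow{\phi}\neq\overrightarrow{0}$, and (iii)~equation~(\ref{E004}).

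For (iii), substitute the ansatz into the right-hand side of~(\ref{E004}) and simplify using bi-regularity ($\sum_{y\in W}a_{xy}=d_v=q_1+1$ for $x\in U$ and $\sum_{x\in U}a_{wx}=d_c=q_2+1$ for $w\in W$) together with $A\overrightarrow{v}=\lambda\overrightarrow{v}$. Each of the four terms collapses to a combination of $v_u$ and $v_w$, and collecting coefficients turns~(\ref{E004}) into
\begin{equation*}
\xi\begin{pmatrix}\alpha\\ \beta\end{pmatrix}=M\begin{pmatrix}\alpha\\ \beta\end{pmatrix},\qquad M:=\begin{pmatrix}-q_1 & -\lambda\\ \lambda q_1 & \lambda^2-q_2\end{pmatrix} .
\end{equation*}
Crucially, once $(\alpha,\beta)$ is an eigenvector of $M$ for $\xi$, identity~(\ref{E004}) holds term-by-term for every edge $(u,w)$, with no need to treat $v_u$ and $v_w$ as ``independent''. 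A short computation gives $\mathrm{tr}\,M=\lambda^2-q_1-q_2$ and $\det M=(-q_1)(\lambda^2-q_2)+\lambda^2 q_1=q_1q_2$, so the characteristic polynomial of $M$ is exactly $\xi^2+(-\lambda^2+q_1+q_2)\xi+q_1q_2$. Hence the two roots in the statement are precisely the eigenvalues of $M$, and each has a nonzero associated pair $(\alpha,\beta)$; their product is $\det M=q_1q_2=(d_v-1)(d_c-1)\geq 1$, so both roots are nonzero, which is (i).

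The one point that needs care is (ii): for the chosen $(\alpha,\beta)$ we must exclude $\alpha v_u+\beta v_w\equiv 0$ over the edges of $G$. Note first that no eigenvector of $M$ has $\alpha=0$ or $\beta=0$ (either would force $\lambda=0$ or $q_1=0$), and if $\alpha\beta\neq 0$ then $v_u=-(\beta/\alpha)v_w$ along every edge; a short connectivity argument then forces $\overrightarrow{v}$ to be constant on each side, which via $A\overrightarrow{v}=\lambda\overrightarrow{v}$ forces $\lambda^2=d_vd_c$, i.e., $\lambda$ is the Perron eigenvalue $\pm\sqrt{d_vd_c}$. In that exceptional case the quadratic factors as $(\xi-1)(\xi-q_1q_2)$; one checks that the $M$-eigenvector belonging to the root $q_1q_2$ still gives $\overrightarrow{\phi}\neq\overrightarrow{0}$ (apart from the trivial $d_v=d_c=2$ cycle, treated directly), while the root $\xi=1$ is an eigenvalue of $\widetilde{A}^2$ for an independent reason, namely that $\pm 1\in\mathrm{spec}(A_e)$ — exhibited by an explicit $A_e$-eigenvector — so $1\in\mathrm{spec}(A_e^2)=\mathrm{spec}(\widetilde{A}^2)\setminus\{0\}$. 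Assembling these observations, Lemma~\ref{lemc} applies to each root and the claim follows. I expect the non-degeneracy check in (ii), and in particular the bookkeeping around $\lambda=\pm\sqrt{d_vd_c}$, to be the main obstacle; once the ansatz $\phi_{(u,w)}=\alpha v_u+\beta v_w$ is in hand, the reduction to the $2\times 2$ matrix $M$ is routine.
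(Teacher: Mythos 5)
Your construction is essentially the paper's own: the paper also builds eigenvectors of $\widetilde{A}^2$ from an eigenvector $\overrightarrow{\mu}$ of $A$ by a linear ansatz on edge pairs, namely $\phi_{(u,w)}=a_{uw}(\mu_w-f_1\mu_u)$ (and the analogous $f_2$-form on the $(W,U)$ block), substitutes into (\ref{E004})--(\ref{EE001}) via Lemma~\ref{lemc}, and eliminates the compatibility conditions (\ref{E011}) to land on exactly the quadratic (\ref{E013}). Your two-parameter form $\alpha v_u+\beta v_w$ with the $2\times 2$ matrix $M$ is the same computation packaged so that the quadratic appears as the characteristic polynomial of $M$ (your trace and determinant are correct), and your remark that matching the coefficients of $v_u$ and $v_w$ suffices, without any independence claim, is exactly how sufficiency works in the paper too; restricting the support to the $(U,W)$ block so that (\ref{EE001}) holds trivially is likewise what the paper does with the vectors (\ref{EEE02}) in Case A of Theorem~\ref{T03}.

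Where you go beyond the paper is the non-vanishing check (ii), and that is a genuine improvement rather than a detour: for $\lambda=\pm\sqrt{d_vd_c}$ and the root $\xi=1$ the ansatz does collapse (in the paper's parametrization, $f_1=\lambda/(1+q_1)$ makes $\mu_w-f_1\mu_u\equiv 0$ for the Perron vector), which is precisely why Theorem~\ref{T03} excludes $\xi=1$ from Step 1 and recovers the eigenvalue $1$ separately in Case C via Fact 4, while the lemma's own proof is silent on this. Two loose ends remain in your write-up. First, the claim that $\pm 1\in\mathrm{spec}(A_e)$ is asserted but never exhibited; you can close it either by citing the dimension count of Fact 4, or with the explicit vector that is $+1$ on the arcs of an oriented cycle, $-1$ on their reverses, and $0$ elsewhere: the values on the arcs leaving any node sum to zero, so $(A_e\psi)_f=-\psi_{\overline{f}}=\psi_f$, giving eigenvalue $1$ of $A_e$ and hence of $\widetilde{A}^2$. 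Second, the case $d_v=d_c=2$ is announced as ``treated directly'' but not actually treated, and your connectivity step uses hypotheses (connectedness, $q_1\geq 1$, $q_2\geq 2$) that appear in Theorem~\ref{T03} rather than in the lemma statement; under those hypotheses both issues disappear, so the core argument is sound.
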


\begin{proof}
 Let $\lambda$ be an eigenvalue of the adjacency matrix $A$ with a corresponding eigenvector $ \overrightarrow{\mu}=[\mu_{u_1},\ldots, \mu_{u_{n}}, \mu_{w_1},\ldots, \mu_{w_{m}}]^t$ (note that the elements of the eigenvector are sorted by listing the elements corresponding to the nodes in $U$ first, followed by those corresponding to 
 the nodes in $W$). 
 By using $ \overrightarrow{\mu}$, we define a vector $\overrightarrow{\phi}$ of size $|V|^2$ in the following way (the element corresponding to the pair $(x,y), x \in V, y \in V$, in 
 $\overrightarrow{\phi}$ is denoted by $\phi_{(x,y)}$):
\begin{equation}\label{E008}
\phi_{(x,y)}=
\begin{cases}
a_{xy}(\mu_y- f_1\mu_x),       & \text{if}\,\,x\in U,\,\,y\in W,\\
a_{xy}(\mu_y- f_2\mu_x),       & \text{if}\,\,x\in W,\,\,y\in U,\\
0,       & \text{otherwise},\
\end{cases}
\end{equation}
where $f_1$ and $f_2$ are constant numbers. 
Now, we show that by the proper choice of $f_1$ and $f_2$, the vector $\overrightarrow{\phi}$ is an eigenvector of 
$\widetilde{A}^2$, and in the process find the corresponding eigenvalues $\xi$.

By substituting (\ref{E008}) in  (\ref{E004}), we have:
\begin{alignat*}{2}\label{E009}
\xi \phi_{(u,w)}
&=  &&a_{uw}\sum_{x\in U} a_{wx}\sum_{y\in W} a_{xy} (\mu_y- f_1\mu_x)\\
&   &&-a_{uw}\sum_{x\in U} a_{wx} (\mu_w- f_1\mu_x)\\
&   &&-a_{uw}\sum_{y\in W} a_{uy}   (\mu_y- f_1\mu_u)\\
&   &&+a_{uw}(\mu_w- f_1\mu_u)\\
&=  &&a_{uw}\sum_{x\in U} a_{wx}\Big( \lambda \mu_x-(q_1+1)f_1 \mu_x \Big)\\
&   &&-a_{uw}\mu_w(q_2+1)+a_{uw}f_1 \lambda \mu_w\\
&   &&-a_{uw}\lambda \mu_u + a_{uw} f_1 \mu_u (q_1+1)\\
&   &&+a_{uw}\mu_w - a_{uw}f_1 \mu_u\\
&=  &&a_{uw}\lambda^2 \mu_w-a_{uw}(q_1+1)f_1\lambda \mu_w\\
&   &&-a_{uw}\mu_w(q_2+1)+a_{uw}f_1 \lambda \mu_w\\
&   &&-a_{uw}\lambda \mu_u + a_{uw} f_1 \mu_u (q_1+1)\\
&   &&+a_{uw}\mu_w - a_{uw}f_1 \mu_u\\
&=  && a_{uw}\mu_w\Big(\lambda^2- \lambda f_1 q_1-q_2 \Big)\\
&   &&- a_{uw}\mu_u \Big( \lambda - f_1 q_1\Big)\:, \numberthis
\end{alignat*} 
where in the second and third last steps, we have used the definition of eigenvalue/eigenvector of $A$.
From (\ref{E009}), and considering $\xi \neq 0$, we have:
\begin{equation}\label{E010}
\xi \phi_{(u,w)} = 	a_{uw} \xi \Big(\dfrac{\lambda^2- \lambda f_1 q_1-q_2}{\xi}\mu_w - \dfrac{\lambda -f_1 q_1}{\xi}\mu_u \Big)\:.
\end{equation}
From (\ref{E010}) and (\ref{E008}), we obtain:
\begin{equation}\label{E011}
\begin{cases}
\dfrac{\lambda^2- \lambda f_1 q_1-q_2}{\xi}=1\\
\dfrac{\lambda - f_1
	q_1}{\xi}=f_1
\end{cases}
\end{equation}
By solving (\ref{E011}), we have (note that since $\lambda\neq 0$, by (\ref{E011}), we have $ \xi \neq -q_1$):
\begin{equation}\label{E012}
f_1=\dfrac{\lambda}{\xi+q_1}\:,
\end{equation}
and
\begin{equation}\label{E013}
\xi^2 +(-\lambda^2+q_1+q_2)\xi+q_1q_2=0\:.
\end{equation}

Similarly, by substituting (\ref{E008}) in  (\ref{EE001}), and taking the same steps as those taken in the derivation of~(\ref{E009}), we have:
\begin{equation}\label{EE004}
\xi \phi_{(w,u)} = 	a_{wu} \xi \Big(\dfrac{\lambda^2- \lambda f_2 q_2-q_1}{\xi}\mu_u - \dfrac{\lambda - f_2 q_2}{\xi}\mu_w \Big)\:.
\end{equation}

From  (\ref{EE004}) and (\ref{E008}), we have:
\begin{equation}\label{EE003}
\begin{cases}
\dfrac{\lambda^2- \lambda f_2 q_2-q_1}{\xi}=1\\
\dfrac{\lambda-f_2 q_2}{\xi}=f_2\:.
\end{cases}
\end{equation}

By solving (\ref{EE003}), we obtain (since $\lambda \neq 0$, by (\ref{EE003}), $ \xi \neq -q_2$):
\begin{equation}\label{E005}
f_2=\dfrac{\lambda}{\xi+q_2}\:,
\end{equation}
and the same equation as in (\ref{E013}).

Therefore, by solving (\ref{E013}), we find the eigenvalues $\xi$ of $\widetilde{A}^2$ corresponding to $\lambda$, and then by
substituting the obtained $\xi$ in (\ref{E012}) and (\ref{E005}), we find the constants $f_1$ and $f_2$. These are then replaced in (\ref{E008}) to obtain
the corresponding eigenvectors of  $\widetilde{A}^2$.
\end{proof}

Next, we discuss how the eigenvalues of $A_e$ can be computed from those of $\widetilde{A}^2$.

\subsection{From the spectrum of $\widetilde{A}^2$ to that of $A_e$}

\begin{lem}\label{lem01}  \cite{karimi2012counting}
Let $G$ be a bi-regular bipartite graph and $A_e$ be its directed edge matrix. Then, the
eigenvalues of $A_e$ are symmetric with respect to the origin. Moreover, $\eta^2$ is an eigenvalue
of $A_e^2$ if and only if $ \pm \eta$ are eigenvalues of $A_e$.
\end{lem}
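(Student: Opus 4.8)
The plan is to prove the two claims of Lemma~\ref{lem01} separately, both using the block structure of $A_e$ exhibited in~(\ref{EEE78}).

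First, for the symmetry of the spectrum of $A_e$ about the origin. By~(\ref{EEE78}) we have
$A_e=\left[\begin{array}{l|l} 0 & B_e \\ \hline C_e & 0 \end{array}\right]$,
an anti-block-diagonal matrix with square $|E|\times|E|$ blocks. I would exploit the standard trick for such matrices: if $\left[\begin{array}{c} \overrightarrow{x} \\ \overrightarrow{y}\end{array}\right]$ is an eigenvector of $A_e$ with eigenvalue $\eta$ (so $B_e\overrightarrow{y}=\eta\overrightarrow{x}$ and $C_e\overrightarrow{x}=\eta\overrightarrow{y}$), then $\left[\begin{array}{c} \overrightarrow{x} \\ -\overrightarrow{y}\end{array}\right]$ satisfies $B_e(-\overrightarrow{y})=-\eta\overrightarrow{x}$ and $C_e\overrightarrow{x}=\eta\overrightarrow{y}=-(-\eta)\overrightarrow{y}$, i.e. it is an eigenvector with eigenvalue $-\eta$. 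One must also check the degenerate case $\eta=0$, which is trivially symmetric, and confirm that this map sends a basis of the $\eta$-eigenspace to an independent set in the $(-\eta)$-eigenspace, so that multiplicities are preserved; an equivalent and cleaner phrasing is that conjugating $A_e$ by $\mathrm{diag}(I_{|E|},-I_{|E|})$ yields $-A_e$, hence $A_e$ and $-A_e$ are similar and have identical characteristic polynomials.

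Second, for the statement that $\eta^2$ is an eigenvalue of $A_e^2$ if and only if $\pm\eta$ are eigenvalues of $A_e$. The ``if'' direction is immediate from Property~(1) of the adjacency-type matrices recalled in Section~\ref{section02} (if $\eta$ is an eigenvalue of $A_e$ then $\eta^2$ is an eigenvalue of $A_e^2$). For the ``only if'' direction I would use the block form~(\ref{E003}): $A_e^2=\left[\begin{array}{c|c} B_eC_e & 0 \\ \hline 0 & C_eB_e\end{array}\right]$, so an eigenvalue $\eta^2$ of $A_e^2$ is an eigenvalue of $B_eC_e$ or of $C_eB_e$; since $B_eC_e$ and $C_eB_e$ have the same nonzero eigenvalues (and for the eigenvalue $0$ one argues directly, as $\det A_e=\pm\det B_e\det C_e$), $\eta^2$ is in any case an eigenvalue of $B_eC_e$, say $B_eC_e\overrightarrow{x}=\eta^2\overrightarrow{x}$ with $\overrightarrow{x}\neq\overrightarrow{0}$. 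Then setting $\overrightarrow{y}=\tfrac{1}{\eta}C_e\overrightarrow{x}$ (when $\eta\neq0$) gives $A_e\left[\begin{array}{c}\overrightarrow{x}\\\overrightarrow{y}\end{array}\right]=\eta\left[\begin{array}{c}\overrightarrow{x}\\\overrightarrow{y}\end{array}\right]$, so $\eta$ is an eigenvalue of $A_e$, and by the first part so is $-\eta$; the case $\eta=0$ follows since $A_e$ being (up to similarity) skew-paired has $0$ in its spectrum whenever $A_e^2$ is singular, or more simply $C_e\overrightarrow x=\overrightarrow 0$ would give an eigenvector of $A_e$ with eigenvalue $0$, while $C_e\overrightarrow x\neq \overrightarrow 0$ together with $B_eC_e\overrightarrow x=\overrightarrow 0$ produces one from $\left[\begin{array}{c}\overrightarrow 0\\ C_e\overrightarrow x\end{array}\right]$.

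The main obstacle I anticipate is the careful bookkeeping at the eigenvalue $\eta=0$ and the multiplicity claim: arguing that $B_eC_e$ and $C_eB_e$ share not only nonzero eigenvalues but that every eigenvalue of $A_e^2$ actually appears in $B_eC_e$, and that the eigenvector-swapping maps are injective on eigenspaces so that algebraic/geometric multiplicities match, requires either invoking the similarity of $A_e$ and $-A_e$ (cleanest) or a short rank argument via the Rank--Nullity Theorem quoted in~(\ref{EE99}). Since this lemma is cited from~\cite{karimi2012counting}, I would keep the write-up short, leaning on the similarity transformation $\mathrm{diag}(I,-I)$ for the symmetry and multiplicity part and on the block factorization of $A_e^2$ for the square-root correspondence.
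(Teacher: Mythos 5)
Your proposal is correct, but there is nothing in the paper to compare it against: Lemma~\ref{lem01} is quoted from \cite{karimi2012counting} and used as a black box (e.g., in the proof of Lemma~\ref{lem02}), so the paper gives no proof of its own. Your route is a valid self-contained argument: the anti-block-diagonal form (\ref{EEE78}) (which is exactly where bipartiteness enters), the similarity $\mathrm{diag}(I,-I)\,A_e\,\mathrm{diag}(I,-I)^{-1}=-A_e$ giving symmetry of the spectrum with multiplicities, and the block form of $A_e^2$ in (\ref{E003}) together with the construction $\overrightarrow{y}=\tfrac{1}{\eta}C_e\overrightarrow{x}$ giving the square-root correspondence; the $\eta=0$ case is also handled soundly. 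Two minor streamlinings: since $B_e$ and $C_e$ are square matrices of the same size, $B_eC_e$ and $C_eB_e$ have the same characteristic polynomial (not merely the same nonzero eigenvalues), so the separate bookkeeping for the eigenvalue $0$ when passing between the two products is unnecessary (alternatively, $0$ in the spectrum of $A_e^2$ forces $\det A_e=0$ directly); and the eigenvector-flipping map $(\overrightarrow{x},\overrightarrow{y})\mapsto(\overrightarrow{x},-\overrightarrow{y})$ by itself only matches geometric multiplicities, so the similarity formulation you also give is indeed the right way to state the multiplicity claim.
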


\begin{lem}\label{lem02}
Let $G$ be a bi-regular bipartite graph. Then the spectrum of $\widetilde{A}$ can be computed from that of $\widetilde{A}^2$, i.e., if $\widetilde{A}^2$ has an eigenvalue $\xi$ with multiplicity $m$, then $\widetilde{A}$ has eigenvalues $ \pm \sqrt{\xi}$, each with multiplicity $m/2$. 	 
\end{lem}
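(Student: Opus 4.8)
The plan is to reduce the claim to a purely spectral statement about $A_e$ and $A_e^2$, and then settle it by combining the origin-symmetry of the spectrum of $A_e$ (Lemma~\ref{lem01}) with the elementary fact that passing from a matrix to its square squares the roots of the characteristic polynomial, counted with (algebraic) multiplicity. Throughout, ``multiplicity'' means multiplicity as a root of the characteristic polynomial; this is the notion used implicitly elsewhere in the paper and the only one that matters for the count $N_k=tr(A_e^k)/(2k)$.

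First I would invoke the block decompositions already in hand: by~(\ref{EEE1}), $\widetilde A=\mathrm{diag}(A_e,\,0)$, and by~(\ref{E0033}), $\widetilde A^2=\mathrm{diag}(A_e^2,\,0)$, in both cases with a zero block of size $|V|^2-2|E|$. Hence, for every $\eta\neq0$, the multiplicity of $\eta$ in $\widetilde A$ equals its multiplicity in $A_e$, and for every $\xi\neq0$, the multiplicity of $\xi$ in $\widetilde A^2$ equals its multiplicity in $A_e^2$. So it suffices to show that if $\xi\neq0$ is an eigenvalue of $A_e^2$ with multiplicity $m$, then $+\sqrt{\xi}$ and $-\sqrt{\xi}$ are eigenvalues of $A_e$, each with multiplicity $m/2$ (in particular, $m$ is even). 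To this end I would triangularize $A_e$ over $\mathbb C$ (Schur form), $A_e=PTP^{-1}$ with $T$ upper triangular and diagonal entries $\eta_1,\dots,\eta_{2|E|}$ the eigenvalues of $A_e$ listed with multiplicity; then $A_e^2=PT^2P^{-1}$ has characteristic polynomial $\prod_j(x-\eta_j^2)$, so the multiplicity of $\xi$ in $A_e^2$ is the sum of the multiplicities in $A_e$ of those eigenvalues $\eta$ of $A_e$ with $\eta^2=\xi$. Fixing $\xi\neq0$ and a square root $\eta_0\neq0$, those eigenvalues are exactly $\eta_0$ and $-\eta_0$, and they are distinct; by Lemma~\ref{lem01} --- equivalently, because $\mathrm{diag}(I_{|E|},-I_{|E|})\,A_e\,\mathrm{diag}(I_{|E|},-I_{|E|})=-A_e$ by the block form~(\ref{EEE78}), so $A_e$ and $-A_e$ are similar --- these two eigenvalues share a common multiplicity $a$ in $A_e$, whence $m=2a$ and $a=m/2$. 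Undoing the first reduction gives the statement for $\xi\neq0$. For $\xi=0$, the same characteristic-polynomial identity (with $0^2=0$) shows the multiplicity of $0$ is unchanged from $A_e$ to $A_e^2$, hence also from $\widetilde A$ to $\widetilde A^2$ after adding the common $|V|^2-2|E|$ zeros; since $\pm\sqrt{0}=0$, the two prescribed multiplicities $m/2$ together account for exactly these $m$ zero eigenvalues.

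I expect the routine parts to be the two reductions through the block forms and the $\xi=0$ case; the heart of the argument is the pairing step, and the only point there requiring care is that the relevant multiplicity must be the algebraic one --- so that squaring the characteristic polynomial is literally valid --- and that the evenness of $m$ must be obtained as a consequence of the symmetry-induced pairing rather than assumed. Attempting instead to prove the lemma directly from the eigenvector description in Lemma~\ref{lemc} would be considerably more cumbersome and is best avoided.
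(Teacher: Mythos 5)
Your proposal is correct and follows essentially the same route as the paper, which proves the lemma simply by invoking Lemma~\ref{lem01} together with the block forms (\ref{EEE1}) and (\ref{E0033}); you merely make explicit (via the Schur form and the similarity $\mathrm{diag}(I,-I)A_e\,\mathrm{diag}(I,-I)=-A_e$) the multiplicity bookkeeping that the paper leaves implicit.
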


\begin{proof}
The proof follows from Lemma \ref{lem01},  (\ref{EEE1}) and (\ref{E0033}).
\end{proof}
Using Lemmas~\ref{lema1} and \ref{lem02}, one can obtain the spectrum of $A_e$ from that of $\widetilde{A}^2$.

\subsection{From the spectrum of $A$ to that of $A_e$}

\begin{theo}\label{T03}  
Let $G=(V=U \cup W,E)$ be a connected bi-regular bipartite graph such that each node in $U$ has degree $q_1+1$ and each node in $W$ has degree $q_2+1$, where $q_2 \geq 2$, $q_1 \geq 1$ and $q_2 \geq q_1$. Also, assume that $|U|=n$ and $|W|=m$. 	
The eigenvalues of the directed edge matrix $A_e$ of $G$ can then be computed from the eigenvalues of the adjacency matrix $A$ as follows:\\
{\bf \underline{Step 1.}}  For each strictly negative eigenvalue 
$\lambda$ of $A$, use Equation (\ref{E013}) to find two solutions. For each solution $\xi \neq 1$, the numbers  $\pm\sqrt{\xi}$ are eigenvalues of $A_e$, each with the same multiplicity as that of $\lambda$ in the spectrum of $A$.
%
(The total number of eigenvalues of $A_e$ obtained in this step is $2(m+n)-2Null(A)-2$.)\\
{\bf\underline{Step 2.}} Matrix $A_e$ also has the eigenvalues $\pm\sqrt{-q_1}$ and $\pm\sqrt{-q_2}$. The multiplicity of each of 
the eigenvalues $\pm\sqrt{-q_1}$ ($\pm\sqrt{-q_2}$) is $ n-Rank(A)/2$ ($ m-Rank(A)/2 $).
\footnote{Note that $\pm\sqrt{-q_1}$ and $\pm\sqrt{-q_2}$ are solutions of (\ref{E013}) for $\lambda=0$.} (The total number of  of eigenvalues
of $A_e$ obtained in this step is $2(m+n)- 2Rank(A)=2Null(A)$.)\\
{\bf\underline{Step 3.}} Furthermore, Matrix $A_e$ has eigenvalues $\pm 1$, each with multiplicity 
 $|E|-(m+n)+1$. (The total number of eigenvalues in this step is $2|E|-2(m+n)+2$.)
\end{theo}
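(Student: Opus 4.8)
The plan is to pin down the complete spectrum of $\widetilde{A}^{2}$, with multiplicities, by displaying enough explicit eigenvectors, and then to read off the spectrum of $A_e$ through Lemmas~\ref{lema1}, \ref{lem01} and~\ref{lem02}: a nonzero eigenvalue $\xi$ of $\widetilde{A}^{2}$ of multiplicity $2\mu$ contributes the pair $\pm\sqrt{\xi}$ to the spectrum of $A_e$, each with multiplicity $\mu$, while exactly $|V|^{2}-2|E|$ of the zero eigenvalues of $\widetilde{A}^{2}$ are the spurious ones discarded by Lemma~\ref{lema1}. Since $A_e$ has size $2|E|\times2|E|$, it suffices to exhibit, for each claimed eigenvalue of $A_e$, at least the claimed number of linearly independent eigenvectors and to check that the claimed multiplicities sum to $2|E|$; the bound (geometric multiplicity)~$\le$~(algebraic multiplicity) together with the fact that algebraic multiplicities sum to the matrix size then forces every inequality to be an equality, so that the list is both complete and exact. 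The three sources of eigenvectors will be: the construction of the preceding lemma applied to the nonzero eigenvalues of $A$ (Step~1), an explicit family attached to $\ker A$ (Step~2), and a cycle-space construction carried out directly on $A_e$ (Step~3), the last lifted to $\widetilde{A}^{2}$ by padding with zeros.

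For Step~1, fix a strictly negative eigenvalue $\lambda$ of $A$ of multiplicity $\mu_\lambda$. For each $\overrightarrow{\mu}$ in a basis of the $\lambda$-eigenspace and each root $\xi$ of~(\ref{E013}), the vector $\overrightarrow{\phi}$ of~(\ref{E008}), with $f_1,f_2$ given by~(\ref{E012}) and~(\ref{E005}), is an eigenvector of $\widetilde{A}^{2}$ for $\xi$; running the same construction on the $(-\lambda)$-eigenspace (which obeys the very same equation~(\ref{E013})) produces, for each root $\xi$, a total of $2\mu_\lambda$ eigenvectors. One then checks that these are nonzero and jointly independent: the map $\overrightarrow{\mu}\mapsto\overrightarrow{\phi}$ is injective on the $\lambda$-eigenspace whenever $\xi\neq1$ (because $f_1f_2=1/\xi$, which follows from~(\ref{E013})), distinct values of $\lambda^{2}$ give disjoint root-pairs — subtract the two instances of~(\ref{E013}) — and hence eigenvectors in distinct eigenspaces of $\widetilde{A}^{2}$, and within a fixed $\lambda^{2}$ a short argument using connectedness separates the four sub-families. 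The root $\xi=1$ occurs only when $\lambda^{2}=(q_1+1)(q_2+1)$, i.e.\ for the Perron eigenvalue, which is simple by the Perron--Frobenius property and bipartiteness recalled in Section~\ref{section02}; those two vectors are set aside and reappear in Step~3. Transferring through Lemma~\ref{lem02} and using $Rank(A)+Null(A)=m+n$ produces precisely the eigenvalues listed in Step~1 and the count $2(m+n)-2Null(A)-2$.

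For Step~2, at $\lambda=0$ equation~(\ref{E013}) becomes $(\xi+q_1)(\xi+q_2)=0$, so eigenvectors of $\widetilde{A}^{2}$ for $\xi=-q_1$ and $\xi=-q_2$ must be built by hand. Writing $A$ in the usual bipartite block form with biadjacency matrix $M$, one verifies directly from~(\ref{E004})--(\ref{EE001}) that the vector with $\phi_{(u,w)}=a_{uw}c_u$ on the pairs $(u,w)$, $u\in U$, $w\in W$, and zero elsewhere is an eigenvector for $\xi=-q_1$ exactly when $M^{t}c=0$; the mirror vector supported on the pairs $(w,u)$ gives a second, disjointly supported family, so $\xi=-q_1$ has at least $2\dim\ker M^{t}=2(n-Rank(A)/2)$ independent eigenvectors, and symmetrically $\xi=-q_2$ has $2(m-Rank(A)/2)$; via Lemma~\ref{lem02} these become $\pm\sqrt{-q_1}$ of multiplicity $n-Rank(A)/2$ and $\pm\sqrt{-q_2}$ of multiplicity $m-Rank(A)/2$, for a total of $2Null(A)$. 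For Step~3 I work with $A_e$ itself: setting $\Sigma(v)=\sum_{o(f)=v}\eta_f$, the equation $A_e\overrightarrow{\eta}=\overrightarrow{\eta}$ rearranges to $\eta_f+\eta_{\bar f}=\Sigma(t(f))$, which forces $\Sigma$ to be constant (connectedness), and summing over all arcs — using $n\,d_v=m\,d_c=|E|$ and the fact that a bi-regular bipartite graph with $q_2\ge2$ has no leaves and is not unicyclic, so $|E|>|V|$ — forces that constant to be $0$; what remains says $\overrightarrow{\eta}$ is a skew-symmetric zero-divergence circulation, a space of dimension equal to the cycle rank $|E|-|V|+1=|E|-(m+n)+1$. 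For $A_e\overrightarrow{\eta}=-\overrightarrow{\eta}$ the same manipulation forces $\Sigma$ to alternate in sign across edges, which is consistent precisely because $G$ is bipartite, and reduces the problem to the kernel of the (unsigned) incidence matrix, again of dimension $|E|-(m+n)+1$. Adding the three steps gives $[2(m+n)-2Null(A)-2]+2Null(A)+2[|E|-(m+n)+1]=2|E|=\dim A_e$, which, by the principle of the first paragraph, completes the proof.

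The main obstacle is the global linear-independence/dimension bookkeeping that welds the three families together: one must be sure that the $A$-built eigenvectors of Steps~1--2, the circulation and incidence eigenvectors of Step~3 (padded with zeros), and the $|V|^{2}-2|E|$ automatic zero eigenvectors of $\widetilde{A}^{2}$ are jointly independent and exhaust $\mathbb{C}^{|V|^{2}}$ — in particular that the Perron vectors set aside in Step~1, which lie inside the span of the Step~3 eigenvectors, are not double-counted. A secondary delicate point is the degenerate case in which~(\ref{E013}) has a repeated root for some non-Perron $\lambda$: the naive construction then supplies only half of the eigenvectors needed for that $\xi$, so one must either argue that this situation cannot occur under $q_2\ge q_1\ge1$, $q_2\ge2$, or replace the eigenvector count there with a characteristic-polynomial (or continuity) argument to secure the multiplicity.
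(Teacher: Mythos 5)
Your proposal is correct in substance and, for Steps~1 and~2 and for the final bookkeeping, follows essentially the paper's route: eigenvectors of $\widetilde{A}^2$ are manufactured from eigenvectors of $A$ via (\ref{E008}) and the roots of (\ref{E013}), transferred to $A_e$ through Lemmas~\ref{lema1}, \ref{lem01} and~\ref{lem02}, and the per-eigenvalue lower bounds are forced to be exact because they sum to $2|E|$. In Step~1 your variant (running (\ref{E008}) on both the $\lambda$- and $(-\lambda)$-eigenspaces) is equivalent to the paper's use, for a single $\lambda$, of the two disjointly supported families (\ref{EEE02})--(\ref{EEE03}): the joint independence you defer to ``a short argument using connectedness'' is exactly the paper's Fact~1 (Appendix~\ref{A2}); it is true and even has a quicker proof than the paper's --- if $\nu=\sum_j c_j\overrightarrow{\mu}_{i,j}$ produces a vanishing $U\!\to\!W$ block, then $\nu_w=f_1\nu_u$ on every edge, hence $\lambda\nu_u=(q_1+1)f_1\nu_u$, which forces $\nu=0$ when $\lambda\neq 0$ and $\xi\neq 1$ --- but as written you assert it rather than prove it. Your genuine departure is Step~3: instead of the paper's Fact~4, which lower-bounds the $\xi=1$ eigenspace of $\widetilde{A}^2$ by exhibiting $2|V|$ linear constraints of which at least two are redundant, you compute the $\pm 1$ eigenspaces of $A_e$ directly, identifying the $+1$ eigenspace with the cycle space (skew-symmetric divergence-free circulations) and the $-1$ eigenspace with the kernel of the unsigned incidence matrix, each of dimension $|E|-|V|+1$; this buys exact geometric dimensions and a combinatorial interpretation of the $\pm1$ eigenvectors, at the cost of having to rule out, in the $-1$ case, solutions with nonzero alternating $\Sigma$ (this does work --- the compatibility condition of the resulting inhomogeneous incidence system is $|E|=|V|$, which fails here --- but your sketch compresses it). Your closing worry about welding the three families together is largely moot: eigenvectors of distinct eigenvalues are automatically independent, $-q_1$ and $-q_2$ never solve (\ref{E013}) for $\lambda\neq 0$, $\xi=1$ arises only from the Perron pair and is excluded from Step~1, and distinct $\lambda^2$ give disjoint roots. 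Finally, the repeated-root scenario you flag ($\lambda^2=(\sqrt{q_1}\pm\sqrt{q_2})^2$, where the construction supplies only half the eigenvectors needed at the double root $\xi=\pm\sqrt{q_1q_2}$) is a real subtlety, but the paper's own counting makes the same tacit assumption that the two roots of (\ref{E013}) are distinct, so your caveat identifies a gap shared with the published proof rather than one created by your approach.
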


\begin{proof}
In the following, we find the set of eigenvalues of $\widetilde{A}^2$ and their multiplicities, and then use 
Lemmas \ref{lema1} and \ref{lem02} to obtain the set of eigenvalues of $A_e$. 

Suppose that the spectrum of $A$ is $\{\pm \lambda_1^{m_1}, \ldots, \pm \lambda_r^{m_r}\}$, for some $r \geq 1$, where $\sum_{i=1}^r 2\times m_i=|V|$.
For each $i$, $1 \leq i \leq r$, there are $m_i$ linearly independent eigenvectors $\overrightarrow{\mu}_{i,1}, \ldots, \overrightarrow{\mu}_{i,m_i}$, 
associated with the eigenvalue $\lambda_{i}$. 

For each $i$, let $\xi_{i_1}$ and $\xi_{i_2}$ be the two eigenvalues obtained from (\ref{E013}) by replacing $\lambda$ by $\lambda_{i}$ (note that the solutions of (\ref{E013}) for $\lambda=-\lambda_i$ are the same as those for $\lambda=\lambda_{i}$). 
We consider three cases that cover all possible scenarios. Case A: $\lambda_{i}\neq 0$ and $\xi_{i_1}\neq 1 $; Case B: $\lambda_{i}= 0$; and 
Case C: $\lambda_{i}\neq 0$ and $ \xi_{i_1}=1$. (Cases A, B and C correspond to Steps 1, 2 and 3 of the derivation of all the eigenvalues of $A_e$. 
Note that, for each of Cases A, B and C, in the following, we find a lower bound on the multiplicity of the eigenvalues of $A_e$ (or those of $\widetilde{A}^2$) that are obtained in those cases. Based on the fact that the sum of the obtained lower bounds is equal to $2|E|$ ($|V|^2$) for $A_e$ ($\widetilde{A}^2$), we conclude that in each case, the multiplicity of the eigenvalues is exactly equal to the lower bound.)
\\
{\bf Case A. ($\lambda_{i}\neq 0$ and $\xi_{i_1}\neq 1$)}
In this case, we show that for each $i$, the multiplicity of $\xi_{i_1}$ is at least $2\times m_i$.\footnote{As explained before, this lower bound is tight.}  

Consider vectors $\overrightarrow{\phi}_{i,1}, \ldots,\overrightarrow{\phi}_{i,m_i}$, each of size $|V|^2$, corresponding to 
eigenvectors $\overrightarrow{\mu}_{i,1}, \ldots, \overrightarrow{\mu}_{i,m_i}$ of $A$ associated with eigenvalue $\lambda_i$, respectively.
Assume that the element $(x,y), x \in V, y \in V$, of each vector $\overrightarrow{\phi}_{i,j}$ is derived from the elements of the corresponding vector
$\overrightarrow{\mu}_{i,j}$ using the following equation:
 \begin{equation}\label{EEE02}
\phi_{(x,y)}=
\begin{cases}
a_{xy}(\mu_y- f_1\mu_x),       & \text{if}\,\,x\in U,\,\,y\in W,\\
0,       & \text{otherwise},\
\end{cases}
\end{equation}
where $f_1=\dfrac{\lambda_{i}}{\xi_{i_1}+q_1}$. Using simple calculations, one can see that 
for each $j$, we have $\widetilde{A}^2 \overrightarrow{\phi}_{i,j} = \xi_{i_1} \overrightarrow{\phi}_{i,j}$, and thus, $\overrightarrow{\phi}_{i,1}, \ldots,\overrightarrow{\phi}_{i,m_i}$ are eigenvectors associated with the eigenvalue  $\xi_{i_1}$.


Also, consider vectors $\overrightarrow{\rho}_{i,1}, \ldots,\overrightarrow{\rho}_{i,m_i}$, each of size $|V|^2$, corresponding to 
eigenvectors $\overrightarrow{\mu}_{i,1}, \ldots, \overrightarrow{\mu}_{i,m_i}$ of $A$ associated with eigenvalue $\lambda_i$, respectively.
Assume that the element $(x,y), x \in V, y \in V$, of each vector $\overrightarrow{\rho}_{i,j}$ is derived from the elements of the corresponding vector
$\overrightarrow{\mu}_{i,j}$ using the following equation:
\begin{equation}\label{EEE03}
\rho_{(x,y)}=
\begin{cases}
a_{xy}(\mu_y- f_2\mu_x),       & \text{if}\,\,x\in W,\,\,y\in U,\\
0,       & \text{otherwise},\
\end{cases}
\end{equation}
where $f_2=\dfrac{\lambda_{i}}{\xi_{i_1}+q_2}$. For each $j$, we have $\widetilde{A}^2 \overrightarrow{\rho}_{i,j} = \xi_{i_1} \overrightarrow{\rho}_{i,j}$, and thus, vectors $\overrightarrow{\rho}_{i,1}, \ldots,\overrightarrow{\rho}_{i,m_i}$ are also eigenvectors associated with the eigenvalue  $\xi_{i_1}$.

Regarding the dependency within each of the two groups of eigenvectors $\{\overrightarrow{\phi}_{i,j}\}$ and $\{\overrightarrow{\rho}_{i,j}\}$, we have 
the following fact whose proof is provided in Appendix \ref{A2}.

{\bf Fact 1.} The vectors $\overrightarrow{\phi}_{i,1}, \ldots,\overrightarrow{\phi}_{i,m_i}$ are linearly independent. So are the vectors 
$\overrightarrow{\rho}_{i,1}, \ldots,\overrightarrow{\rho}_{i,m_i}$.

Fact 1 together with the fact that there is no overlap between the location of non-zero elements  in any vector in the set $\{\overrightarrow{\phi}_{i,j}\}$ and that of 
any vector in the set $\{\overrightarrow{\rho}_{i,j}\}$ prove that the multiplicity of $\xi_{i_1}$, in Case A, is at least $2\times m_i$.

By Lemmas~\ref{lema1} and \ref{lem02}, the number of eigenvalues $\eta$ of $A_e$ that are obtained from Case A is the same as the number of eigenvalues 
$\xi$ of $\widetilde{A}^2$ that are obtained for this case. To count the total number of eigenvalues $\xi$ of $\widetilde{A}^2$, we note that 
the total number of non-zero eigenvalues $\lambda$ of $A$ is $m+n-Null(A)$, out of which half are negative. This together with the fact that each eigenvalue $\lambda$
results in two eigenvalues $\xi$ and that if the multiplicity of $\lambda$ is $m$, then the multiplicity of each resulting $\xi$ is $2m$ implies that the total
number of eigenvalues $\xi$ is $2(m+n-Null(A))$. For Case A, however, we have excluded $\xi = 1$. It is easy to see that (\ref{E013}) has a solution $\xi=1$ if and only if
$\lambda =  \pm\sqrt{(1+q_1)(1+q_2)}$. (The other solution of (\ref{E013}) in this case is $\xi=q_1 q_2$.) 
These are the two eigenvalues of $A$ with the largest magnitude (and each with multiplicity one). Excluding $\xi = 1$, which has multiplicity two, means that for $\lambda = - \sqrt{(1+q_1)(1+q_2)}$, rather than four $\xi$ values, we only have two counted in Case A ($\xi =  q_1 q_2$ with multiplicity two). This reduces the total number
of eigenvalues $\xi$ for Case A to $2(m+n-Null(A)) - 2$.



{\bf Case B. ($\lambda_{i}= 0$)}
For this case, in the following, we show that we have two eigenvalues $\xi_{i_1}=-q_1$ and $\xi_{i_2}=-q_2$ for $\widetilde{A}^2$. 
(Note that these eigenvalues are in fact the solutions of (\ref{E013}) for $\lambda_{i}= 0$.) These eigenvalues, 
based on Lemmas~\ref{lema1} and \ref{lem02}, result in eigenvalues $\pm \sqrt{-q_1}$ and $\pm \sqrt{-q_2}$ for $A_e$. 
In the following, we also prove that the multiplicities of the eigenvalues $\xi_{i_1}$ and $\xi_{i_2}$ of $\widetilde{A}^2$ are $2n- Rank(A)$ and $2m- Rank(A)$, respectively. This together with Lemma~\ref{lem02} prove the claim of the theorem for the multiplicities of 
eigenvalues $\pm \sqrt{-q_1}$ and $\pm \sqrt{-q_2}$ of $A_e$.

To prove that $\xi_{i_1}=-q_1$ and $\xi_{i_2}=-q_2$ are eigenvalues of $\widetilde{A}^2$, and to obtain their multiplicities, we note that 
the graph $G$ is bipartite, and thus its adjacency matrix has the following form
\[
A =\left[
\begin{array}{l|l}
0_{ n\times n}  & D_{n\times m} \\
\hline
D_{m\times n}^t          & 0_{ m\times m}    \\	
\end{array}
\right]\:.
\]
As a result, we have the following fact whose proof is presented in Appendix \ref{A3}.

{\bf Fact 2.} We have
\begin{equation}\label{EEE54}
Null(D)=m-Rank(A)/2 \:,
\end{equation}
and
\begin{equation}\label{EEE55}
Null(D^t)=n-Rank(A)/2\:.
\end{equation}

Let $\overrightarrow{\mu}_{1}, \ldots, \overrightarrow{\mu}_{t}$,  where $t = m-Rank(A)/2$, be the linearly independent eigenvectors of matrix $D$ associated with
eigenvalue $0$. Corresponding to each vector $\overrightarrow{\mu}_{i}$ in the null space of $D$, we define the following two vectors  $\overrightarrow{\phi}_{i}$ and 
$\overrightarrow{\phi}_{i}'$, each of size $|V|^2$:
\begin{equation}\label{EEE56}
\phi_{(x,y)}=
\begin{cases}
a_{xy}\mu_y,       & \text{if}\,\,x\in U,\,\,y\in W, \\
a_{xy}\mu_x,       & \text{if}\,\,x\in W,\,\,y\in U, \\
0,                         & \text{otherwise},\
\end{cases}
\end{equation}
and
\begin{equation}\label{EEE57}
\phi_{(x,y)}'=
\begin{cases}
a_{xy}\mu_y,       & \text{if}\,\,x\in U,\,\,y\in W, \\
0,       & \text{otherwise},\ 
\end{cases}
\end{equation}
where $\phi_{(x,y)}$ ($\phi_{(x,y)}'$) is the element of $\overrightarrow{\phi}_{i}$ ($\overrightarrow{\phi}_{i}'$) corresponding to the pair of nodes $(x,y)$,
and $\mu_x$ ($\mu_y$) is the element of $\overrightarrow{\mu}_{i}$ corresponding to node $x$ ($y$) $\in W$. We then have the following result whose 
proof is provided in Appendix \ref{A4}.

{\bf Fact 3.} Vectors $\overrightarrow{\phi}_{i}$  and $\overrightarrow{\phi}_{i}'$ are eigenvectors of $\widetilde{A}^2$ associated with eigenvalue $-q_2$. 

Since the vectors $\overrightarrow{\mu}_{1}, \ldots, \overrightarrow{\mu}_{t} $ are linearly independent, then by the definitions (\ref{EEE56})  and (\ref{EEE57}), 
the vectors $\overrightarrow{\phi}_{1},\overrightarrow{\phi}_{1}',\ldots,\overrightarrow{\phi}_{t},\overrightarrow{\phi}_{t}' $ are also linearly independent. 
This implies that the multiplicity of the eigenvalue $-q_2$ of $\widetilde{A}^2$ is at least $2t = 2m-Rank(A)$.

Similarly, corresponding to each vector $\overrightarrow{\mu}_{i}, 1 \leq i \leq n-Rank(A)/2$, in the null space of $D^t$, we define the following 
two vectors  $\overrightarrow{\phi}_{i}$ and  $\overrightarrow{\phi}_{i}'$:
\begin{equation}\label{EEE59}
\phi_{(x,y)}=
\begin{cases}
a_{xy}\mu_x,       & \text{if}\,\,x\in U,\,\,y\in W, \\
a_{xy}\mu_y,       & \text{if}\,\,x\in W,\,\,y\in U, \\
0,                         & \text{otherwise}\,
\end{cases}
\end{equation}
and
\begin{equation}\label{EEE60}
\phi_{(x,y)}'=
\begin{cases}
a_{xy}\mu_x,       & \text{if}\,\,x\in U,\,\,y\in W, \\
0,       & \text{otherwise}.\ 
\end{cases}
\end{equation}
Similar to the proof of Fact 3, it can be seen that these $2n-Rank(A)$ vectors are eigenvectors of $\widetilde{A}^2$ associated with eigenvalue $-q_1$. Moreover, they are linearly 
independent, and thus, the multiplicity of $-q_1$ is at least $2n-Rank(A)$.\footnote{Note that, based on the total multiplicity of the eigenvalues of $\widetilde{A}^2$, the multiplicity of the eigenvalues $-q_2$ and $-q_1$ of $\widetilde{A}^2$ is equal to $2m-Rank(A)$ and $2n-Rank(A)$, respectively.}  

Finally, the sum of multiplicities of the eigenvalues $-q_1$ and $-q_2$ is $2(m+n)-2Rank(A)$, which by the Rank-Nullity Theorem, i.e., 
$Rank(A)+Null(A)=n+m$, is also equal to $2Null(A)$.

{\bf Case C. ($\lambda_{i}\neq 0$ and $\xi_{i_1}=1$)}.

In this case, by (\ref{E013}), we have $\lambda_{i}=\pm\sqrt{(1+q_1)(1+q_2)}$. Corresponding to eigenvalue $\xi=1$ of $\widetilde{A}^2$,
we have eigenvalues $\pm 1$ of $A_e$ (see, Lemma~\ref{lem02}). If the multiplicity of $\xi=1$ is $m$, we have $m/2$ eigenvalues $+1$ and 
$m/2$ eigenvalues $-1$ for $A_e$. In Fact 4 that follows, we prove that $m = 2|E|-2|V|+2$ (proof is given in Appendix \ref{A5}).
This together with the $2|V|-2$ eigenvalues $\eta$ of $A_e$ (or $\xi$ of $\widetilde{A}^2$) obtained in Cases A and B, add up to a total number of $2|E|$.

{\bf Fact 4.} The multiplicity of the eigenvalue $\xi=1$ of $\widetilde{A}^2$ is at least $2|E|-2|V|+2$.\footnote{Based on the total number of eigenvalues for $\widetilde{A}^2$, the 
multiplicity of the eigenvalue $\xi=1$ is equal to $2|E|-2|V|+2$.}
%
\end{proof}

%



\begin{ex}
Let $G$ be the complete bipartite graph $K_{m,n}$. It is well-known that the spectrum of $G$ (eigenvalues of $A$) is $\{0^{m+n-2},\sqrt{mn},-\sqrt{mn}\}$. We thus have
$Null(A)=m+n-2$ and $Rank(A)=2$. We use Theorem \ref{T03} to find the eigenvalues of $A_e$. 

Step 1. The only negative eigenvalue  of $A$ is $-\sqrt{mn}$. By solving the quadratic equation  (\ref{E013}) for $ \lambda= -\sqrt{mn}$, 
we obtain two solutions $1$ and $(m-1)(n-1)$. This gives us eigenvalues  $\eta = \pm\sqrt{(m-1)(n-1)}$ for $A_e$, each with multiplicity one.

Step 2. Matrix $A_e$ has also eigenvalues $\pm\sqrt{-(m-1)}$, each with multiplicity $n-Rank(A)/2=n-1$, and eigenvalues $\pm\sqrt{-(n-1)}$, each with 
multiplicity $m-Rank(A)/2=m-1$. 

Step 3. Also, $A_e$ has eigenvalues $\pm 1$, each with multiplicity $mn-(m+n)+1$.

Consequently, 
using Theorem \ref{T02}, we have
\begin{align*}\label{EEE66}
N_{4}&= \frac{2mn-2(m+n)+2 + 2\Big((m-1)(n-1)\Big)^{4/2}  +(2n-2)(1-m)^{4/2} + (2m-2)(1-n)^{4/2} }{2 \times 4}\\
     &= \frac{ \Big((m-1)(n-1)\Big)  +   \Big((m-1)(n-1)\Big)^{2}  + (n-1)(1-m)^{2} +  (m-1)(1-n)^{2} }{4}\\
     &= \frac{  (m-1)(n-1)\Big( 1+ (m-1)(n-1) + (m-1) + (n-1))\Big) }{4}\\
     &= \frac{  (m-1)(n-1)(mn) }{4}\:,  \numberthis
\end{align*}
and
\begin{align*}\label{EEE67}
N_{6}&= \frac{2mn-2(m+n)+2 + 2\Big((m-1)(n-1)\Big)^{3}  +(2n-2)(1-m)^{3} + (2m-2)(1-n)^{3} }{12}\\
&= \frac{  m(m-1)(m-2)n(n-1)(n-2) }{6} \:. \numberthis
\end{align*}
Equations (\ref{EEE66}) and (\ref{EEE67}) are consistent with the results in the literature~\cite{dehghan2018spectrum}.
\end{ex}


\begin{ex}
Consider the tesseract graph, denoted by $Q_4$, and shown in Fig. \ref{fig002}. This graph, also referred to as the $4$-dimensional hypercube, is bipartite.
It is also $4$-regular, and has parameters $m = n = 8$, and $q_1 = q_2 = 3$. The spectrum of $Q_4$ is 
$\{ (-4)^1, (-2)^4, 0^6,  2^4, 4^1\}$. We use Theorem \ref{T03} to find the eigenvalues of $A_e$. From the spectrum of $A$, we have $Null(A)=6$ and $Rank(A)=10$.

Step 1. Matrix $A$ has two negative eigenvalues: $-4$ and $-2$.  By solving (\ref{E013}) for $ \lambda=-4$, we obtain two solutions $ 1$ and $9$. This accounts for 
eigenvalues $\pm 3$ for $A_e$, each with multiplicity one. Also, by solving (\ref{E013})  for $ \lambda=-2$, we obtain two solutions $-1 \pm 2 \sqrt{2} i$, where $i = \sqrt{-1}$. 
This accounts for four eigenvalues  $\pm \sqrt{ -1 \pm 2 \sqrt{2} i }$ for $A_e$, each with multiplicity $4$. 

Step 2. Matrix $A_e$ also has eigenvalues $\pm\sqrt{-3}$, each with multiplicity $n-Rank(A)/2=3$, and eigenvalues $\pm\sqrt{-3}$, each with multiplicity $m-Rank(A)/2=3$
($\pm\sqrt{-3}$, each with multiplicity $6$, in total). 

Step 3. Also, the matrix $A_e$ has the eigenvalues $\pm 1$, each of multiplicity $ |E|-(m+n)+1=17 $.

Now, we use Theorem \ref{T02} to find the number of $4$-cycles in $Q_4$:
\begin{align*}
N_4= \dfrac{2(3)^4 + 8 (-1 + 2 \sqrt{2} i)^2  + 8 (-1 - 2 \sqrt{2} i)^2+ 12 (3)^2+ 34}{8}=24.
\end{align*}
This matches the multiplicity obtained by the backtracking algorithm of \cite{Jeff}. 
\end{ex}

 \begin{figure}[ht]
 	\begin{center}
 		\includegraphics[scale=.4]{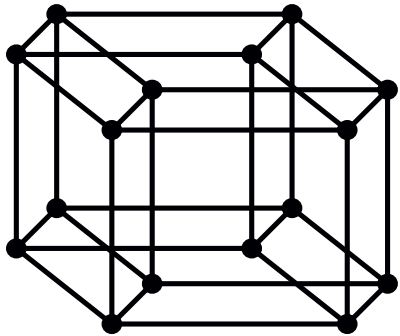}
 		\caption{The tesseract graph $Q_4$.
 		} \label{fig002}
 	\end{center}
 \end{figure}

\section{Conclusion}
\label{section06}
In this paper, we investigated the relationship between the spectra of the adjacency matrix $A$ and the directed edge matrix $A_e$ of a bi-regular bipartite graph.
We proved that the latter spectrum can be derived from the former through simple quadratic equations. Through this relationship, we established a connection between two existing computational methods for counting short cycles (of length less than or equal to $2g-2$, where $g$ is the girth of the graph) in bi-regular bipartite graphs. 
The first method performs such computations using the spectrum of $A_e$ and has complexity $\mathcal{O}(|E|^3)$, where $|E|$ is the number of edges in the graph. The second method uses the graph spectrum and degree sequences of the graph for computations, and has complexity $\mathcal{O}(|V|^3)$,
where $|V|$ is the number of nodes in the graph. The latter complexity can be significantly lower than the former 
for graphs with large node degrees. The downside of the latter approach, however, is that the equations involved in the computations are very tedious, particularly for 
the calculation of multiplicity of $k$-cycles with $k \geq g+6$. Using the results of this work, one can compute the multiplicity of short cycles in a bi-regular bipartite graph using the first approach but with complexity $\mathcal{O}(|V|^3)$ (and without any need for the tedious equations of the second approach).

\section{Appendix} 
\subsection{Proof of Fact 1.}
\label{A2}

We first prove the following lemma which is subsequently used in the proof of Fact 1.

\begin{lem} \label{lem001}
	Let $G=(V=U \cup W,E)$ be a bi-regular bipartite graph with adjacency matrix $A$, and assume that $U=\{u_1,\ldots,u_n\}$ and $W=\{w_1,\ldots,w_m\}$. 
	If $\overrightarrow{\mu}_{i,j}^t=(\mu_{i,j,u_1}, \ldots, \mu_{i,j,u_n},\mu_{i,j,w_1}, \ldots, \mu_{i,j,w_m}  )$ is an eigenvector of $A$ corresponding to the 
	eigenvalue $\lambda_i$ (index $j$ accounts for the possibility of multiple eigenvectors corresponding to the same eigenvalue $\lambda_i$), and $u_k \in U$, then 
	\begin{equation}\label{EEE01}
	\mu_{i,j,u_k}=\dfrac{\sum_{w_t u_k\in E(G)}\mu_{i,j,w_t}}{\lambda_{i}}\:.
	\end{equation}  
\end{lem}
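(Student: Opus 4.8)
The plan is to read off the claimed identity directly from the eigenvalue equation $A\overrightarrow{\mu}_{i,j} = \lambda_i \overrightarrow{\mu}_{i,j}$, restricted to the single coordinate indexed by $u_k$. Concretely, I would first recall that since $G$ is bipartite with parts $U$ and $W$, every neighbour of $u_k \in U$ lies in $W$; hence the row of $A$ indexed by $u_k$ has a $1$ in column $w_t$ precisely when $w_t u_k \in E(G)$ and $0$ elsewhere (in particular $a_{u_k u_\ell} = 0$ for all $\ell$). Using the ordering of coordinates fixed in the statement (the $U$-entries first, then the $W$-entries), the $u_k$-th component of $A\overrightarrow{\mu}_{i,j}$ is therefore $\sum_{w_t : \, w_t u_k \in E(G)} \mu_{i,j,w_t}$.

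Next I would equate this with the $u_k$-th component of $\lambda_i \overrightarrow{\mu}_{i,j}$, namely $\lambda_i \mu_{i,j,u_k}$, obtaining
\begin{equation*}
\lambda_i \mu_{i,j,u_k} = \sum_{w_t u_k \in E(G)} \mu_{i,j,w_t}\:.
\end{equation*}
Since the lemma is only invoked for nonzero eigenvalues (Case A of the proof of Theorem~\ref{T03}, where $\lambda_i \neq 0$), dividing both sides by $\lambda_i$ yields~(\ref{EEE01}). I would add one sentence noting that the hypothesis $\lambda_i \neq 0$ is exactly what makes the division legitimate, so the statement should be read under that assumption.

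There is no real obstacle here: the entire content is the bipartite structure of $A$ (a neighbour of a $U$-node is a $W$-node, so the $u_k$-th row of $A$ sums the eigenvector entries over $N(u_k) \subseteq W$) together with one scalar division. The only point requiring a word of care is the implicit nonvanishing of $\lambda_i$, and possibly remarking — for use in Fact~1 — that the analogous identity $\mu_{i,j,w_k} = \bigl(\sum_{u_t w_k \in E(G)} \mu_{i,j,u_t}\bigr)/\lambda_i$ holds for $w_k \in W$ by the same argument applied to the corresponding row of $A$.
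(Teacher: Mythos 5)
Your proof is correct and is essentially the paper's own argument: the paper writes $A$ in the block form $\left[\begin{smallmatrix}0 & D\\ D^t & 0\end{smallmatrix}\right]$ and reads off $D\overrightarrow{w}=\lambda_i\overrightarrow{u}$, which row-by-row is exactly your coordinate-wise reading of the eigenvalue equation at $u_k$. Your explicit remark that the division requires $\lambda_i\neq 0$ (implicit in the paper, since the lemma is only used in Case A) is a fair and correct clarification, not a deviation.
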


\underline{Proof of Lemma \ref{lem001}}
 
In the adjacency matrix $A$ of the graph $G$, sort the nodes in the following order: $u_1,\ldots,u_n, w_1,\ldots,w_m$. 
Let $ \overrightarrow{u}^t= (\mu_{i,j,u_1}, \ldots, \mu_{i,j,u_n})$ and $ \overrightarrow{w}^t=(\mu_{i,j,w_1}, \ldots, \mu_{i,j,w_m})$. 
Since $\lambda_{i}$ is an eigenvalue of $A$ we have:
	\begin{equation}\label{EE016}
	A
		\left[
	\begin{array}{c }
	 \overrightarrow{u }  \\
	
 \overrightarrow{	w}    \\	
	\end{array}
	\right]=\left[
	\begin{array}{c c}
	0 & D \\
	
	D^t & 0   \\	
	\end{array}
	\right]
	\left[
	\begin{array}{c }
 \overrightarrow{	u}   \\
	
	 \overrightarrow{w}    \\	
	\end{array}
	\right]
	=\lambda_{i}
	\left[
	\begin{array}{c }
 \overrightarrow{	u  } \\
	
 \overrightarrow{	w }   \\	
	\end{array}
	\right]
	\end{equation}
	Thus, $D  \overrightarrow{w}=\lambda_{i}  \overrightarrow{u }$ and $D^t  \overrightarrow{u}=\lambda_{i}  \overrightarrow{w}$. From the first equation, we obtain (\ref{EEE01}).
	This completes the proof of the lemma.

To prove Fact 1, we first show that vectors $\overrightarrow{\phi}_{i,1}, \ldots,\overrightarrow{\phi}_{i,m_i}$ are linearly independent.
To prove the claim, we use contradiction. To the contrary, assume that vectors $\overrightarrow{\phi}_{i,1}, \ldots,\overrightarrow{\phi}_{i,m_i}$ are 
not linearly independent. So, there are constant numbers, $c_{i,1}, \ldots, c_{i, m_i}$, such that at least two are non-zero and we have 
\begin{equation} \label{EE007}
c_{i,1}\overrightarrow{\phi}_{i,1}+ \cdots+ c_{i, m_i}\overrightarrow{\phi}_{i,m_i} =\overrightarrow{0}\:.  
\end{equation}
Let $xy\in E$,  $x\in U$, and $y\in W$. Consider the row corresponding to the pair of nodes $(x,y)$ in (\ref{EE007}). We have: 
\begin{equation} \label{EE008}
c_{i,1}{\phi}_{i,1,(x,y)}+ \cdots+ c_{i, m_i}{\phi}_{i,m_i,(x,y)} =0\:.
\end{equation}
By substituting (\ref{EEE02}) in (\ref{EE008}) and applying $f_1= \frac{\lambda_{i}}{\xi_{i_1}+q_1}$, we have 
\begin{equation}\label{EE009}
\displaystyle\sum_{j=1}^{m_i} c_{i,j}(\mu_{i,j,y}-\frac{\lambda_{i}}{\xi_{i_1}+q_1} \mu_{i,j,x})=0\:.
\end{equation}
Since $q_1 +1 \geq 2$, there is a node $y'\in W$, such that $y'\neq y$ and $xy'\in E(G)$. Similar to (\ref{EE009}), we thus have
\begin{equation}\label{EEE04}
\displaystyle\sum_{j=1}^{m_i} c_{i,j}(\mu_{i,j,y'}-\frac{\lambda_{i}}{\xi_{i_1}+q_1} \mu_{i,j,x})=0\:.
\end{equation}
From (\ref{EE009}) and (\ref{EE004}),  we obtain
\begin{equation}\label{EEE05}
\displaystyle\sum_{j=1}^{m_i} c_{i,j}\mu_{i,j,y}
=
\displaystyle\sum_{j=1}^{m_i} c_{i,j}\mu_{i,j,y'}\:.
\end{equation}
Since the graph is connected, for any two nodes $y,y'\in W$, we have (\ref{EEE05}).
By the same approach, for every two nodes $x,x'\in U$, we have
\begin{equation}\label{EEE0t}
\frac{\lambda_{i}}{\xi_{i_1}+q_1}\displaystyle\sum_{j=1}^{m_i} c_{i,j} \mu_{i,j,x}
= 
\frac{\lambda_{i}}{\xi_{i_1}+q_1}\displaystyle\sum_{j=1}^{m_i} c_{i,j}\mu_{i,j,x'}\:.
\end{equation}
In Case A, we assumed that $\lambda_{i}\neq 0 $. So, by (\ref{E013}), we have $ \xi_{i_1} \neq -q_1$. Thus, $\frac{\lambda_{i}}{\xi_{i_1}+q_1}$ is a nonzero constant number. Hence, by (\ref{EEE0t}), we have
\begin{equation}\label{EEE07}
\displaystyle\sum_{j=1}^{m_i} c_{i,j} \mu_{i,j,x}
= 
\displaystyle\sum_{j=1}^{m_i} c_{i,j}\mu_{i,j,x'}\:.
\end{equation}

Now, consider the left hand side of (\ref{EE009}). By using Lemma \ref{lem001} for the node $x $ and $\mu_{i,j,x}$, we have
\begin{align*} \label{EEE08}
\displaystyle\sum_{j=1}^{m_i} c_{i,j}\Big(\mu_{i,j,y}-\frac{\lambda_{i}}{\xi_{i_1}+q_1} \mu_{i,j,x}\Big)&= \displaystyle\sum_{j=1}^{m_i} c_{i,j}\Big(\mu_{i,j,y}-\frac{\lambda_{i}}{\xi_{i_1}+q_1} \dfrac{\sum_{y'x\in E(G)}\mu_{i,j,y'}}{\lambda_{i}}\Big)&&\\
&=\displaystyle\sum_{j=1}^{m_i} c_{i,j}\Big(\mu_{i,j,y}-\frac{\sum_{y'x\in E(G)}\mu_{i,j,y'}}{\xi_{i_1}+q_1}\Big) \:.&& \numberthis
\end{align*}
By (\ref{EEE05}), we have
\begin{equation}\label{EEE011}
\displaystyle\sum_{j=1}^{m_i} c_{i,j} \sum_{y'x\in E(G)}\mu_{i,j,y'} = \displaystyle\sum_{j=1}^{m_i} c_{i,j} (q_1+1)\mu_{i,j,y}\:.
\end{equation}
By substituting (\ref{EEE011}) in (\ref{EEE08}), we obtain
\begin{align*}
\displaystyle\sum_{j=1}^{m_i} c_{i,j}\Big(\mu_{i,j,y}-\frac{\lambda_{i}}{\xi_{i_1}+q_1} \mu_{i,j,x}\Big)&=\displaystyle\sum_{j=1}^{m_i} c_{i,j} \Big(\mu_{i,j,y}-\frac{ (q_1+1)\mu_{i,j,y}}{\xi_{i_1}+q_1}\Big) && \\
&=\displaystyle\sum_{j=1}^{m_i} c_{i,j} \mu_{i,j,y}\Big(1-\frac{ q_1+1}{\xi_{i_1}+q_1}\Big) && 
\end{align*}
By (\ref{EE009}), we thus have
\begin{equation}\label{eqas}
\displaystyle\sum_{j=1}^{m_i} c_{i,j} \mu_{i,j,y}\Big(1-\frac{ q_1+1}{\xi_{i_1}+q_1}\Big) = 0\:.
\end{equation}
Since $\xi_{i_1}\neq 1$, thus $1-\frac{ q_1+1}{\xi_{i_1}+q_1}\neq 0$. So,
\begin{equation}\label{EEE10}
\displaystyle\sum_{j=1}^{m_i} c_{i,j} \mu_{i,j,y}=0\:.
\end{equation}
By (\ref{EEE10}) and (\ref{EE009}), and since  $\lambda_{i} \neq 0$, we have
\begin{equation}\label{EEE11}
\displaystyle\sum_{j=1}^{m_i} c_{i,j} \mu_{i,j,x}=0\:.
\end{equation}
Consequently, 
\begin{equation}\label{EEE12}
\displaystyle\sum_{j=1}^{m_i} c_{i,j} \overrightarrow{\mu}_{i,j}=0\:.
\end{equation}
This is, however, in contradiction with the eigenvectors $\overrightarrow{\mu}_{i,1}, \ldots, \overrightarrow{\mu}_{i,m_i}$ being linearly independent.
So, the vectors $\overrightarrow{\phi}_{i,1}, \ldots,\overrightarrow{\phi}_{i,m_i}$ are  linearly independent. With the same approach, we can prove that  the vectors 
$\overrightarrow{\rho}_{i,1}, \ldots,\overrightarrow{\rho}_{i,m_i}$ are linearly  independent.

\subsection{Proof of Fact 2.}
\label{A3}

Consider the following adjacency matrix of a bipartite graph $G$: 
\[
A =\left[
\begin{array}{l|l}
0_{ n\times n}  & D_{n\times m} \\
\hline
D_{m\times n}^t          & 0_{ m\times m}    \\	
\end{array}
\right]\:.
\]
We have
\begin{equation}\label{EEE50}
Rank(A)=Rank(D)+Rank(D^t)\:.
\end{equation}
Also, 
\begin{equation}\label{EEE51}
Rank(D)=Rank(D^t)\:.
\end{equation}
From (\ref{EEE50}) and (\ref{EEE51}), we obtain
\begin{equation}\label{EEE52}
Rank(A)=2Rank(D) \:.
\end{equation}
By the Rank-Nullity Theorem for matrix $D$, we have
\begin{equation}\label{EEE53}
Rank(D)+Null(D)=m\:.
\end{equation}
Thus, by (\ref{EEE52}) and (\ref{EEE53}), we have
$$
Null(D)=m-Rank(A)/2\:.
$$
Similarly, 
$$
Null(D^t)=n-Rank(A)/2\:.
$$

\subsection{Proof of Fact 3.}
\label{A4}

We show that the vector $\overrightarrow{\phi}_{i}$  is an eigenvector of $\widetilde{A}^2$ associated with eigenvalue $-q_2$.
Let $\xi \neq 0$ be an eigenvalue of $\widetilde{A}^2$ corresponding to an eigenvector $\overrightarrow{\phi}$.
Then, by Lemma~\ref{lemc}, $\phi_{(x,y)} =0$, for any pair of nodes $(x,y)$, where $x$ and $y$ are on the same side 
of the bipartition. On the other hand, for $(u,w)$, where $u\in U$ and $w\in W$, by (\ref{E004}), we have
\begin{equation}\label{EEE80}
\xi \phi_{(u,w)} =a_{uw}\sum_{x\in U} a_{wx}\sum_{y\in W} a_{xy} \phi_{(x,y)}
-a_{uw}\sum_{x\in U} a_{wx}  \phi_{(x,w)}
-a_{uw}\sum_{y\in W} a_{uy}   \phi_{(u,y)}
+a_{uw}\phi_{(u,w)} \:.
\end{equation} 
By replacing (\ref{EEE56}) in the right hand side of (\ref{EEE80}), we obtain
\begin{equation}\label{eqwh}
a_{uw}\sum_{x\in U} a_{wx}\sum_{y\in W} a_{xy} \mu_y
-a_{uw}\sum_{x\in U} a_{wx}  \mu_w
-a_{uw}\sum_{y\in W} a_{uy}   \mu_y
+a_{uw}\mu_w \:. 
\end{equation}
Now considering that $\overrightarrow{\mu}$ is in the null space of $D$, the summation $\sum_{y\in W} a_{xy} \mu_y$
in the first term of (\ref{eqwh}) and $\sum_{y\in W} a_{uy} \mu_y$ in the third term are zero. The second term of  (\ref{eqwh}) can also be simplified 
to $-a_{uw} (q_2+1) \mu_w$. Thus, Equation (\ref{eqwh}) reduces to $-q_2 a_{uw} \mu_w$, or $-q_2 \phi_{(u,w)}$, where $\phi_{(u,w)}$
is the $(u,w)^{th}$ element of $\overrightarrow{\phi}_{i}$, as shown in (\ref{EEE56}). Similarly, for $(w,u)$, where $u\in U$ and $w\in W$,
by replacing (\ref{EEE56}) in the right hand side of
(\ref{EE001}), and some simplifications, we obtain $-q_2 a_{wu} \mu_w$, which is equal to $-q_2 \phi_{(w,u)}$, where $\phi_{(w,u)}$
is the $(w,u)^{th}$ element of $\overrightarrow{\phi}_{i}$, as shown in (\ref{EEE56}). This completes the proof  that $\overrightarrow{\phi}_{i}$  
is an eigenvector of $\widetilde{A}^2$ associated with the eigenvalue $\xi=-q_2$.

Similarly, it can be shown that  $\overrightarrow{\phi}_{i}'$ is an eigenvector of $\widetilde{A}^2$ associated with eigenvalue $-q_2$.

\subsection{Proof of Fact 4.}
\label{A5}
To prove the result, we use Lemma~\ref{lemc} to characterize the system of linear equations that describe the eigenvectors of 
$\widetilde{A}^2$ associated with the eigenvalue $\xi=1$. 

First, corresponding to each edge $xy\in E(G)$, we define two variables  $\psi_{(x,y)}$ and $\psi_{(y,x)}$, for a total of $2|E|$ variables.
We then define the vector $\overrightarrow{\rho}$ as:
\begin{equation}\label{EEE444}
\rho_{(x,y)}=
\begin{cases}
\psi_{(x,y)},       & \text{if}\,\,xy\in E(G), \\
0            ,       & \text{otherwise}.\
\end{cases}
\end{equation}
Now, for each node $ u\in U$, consider the following two linear equations (involving variables $\psi_{(x,y)}$ and $\psi_{(y,x)}$):
\begin{equation}\label{EEE16}
\sum_{y\in W} a_{uy}   \psi_{(u,y)}=0\:,
\end{equation}
and
\begin{equation}\label{EEE17}
\sum_{y\in W} a_{uy}  \psi_{(y,u)}=0\:,
\end{equation}
and for each node $w\in W$, consider the following two linear equations:
\begin{equation}\label{EEE18}
\sum_{x\in U} a_{wx}  \psi_{(x,w)}=0\:,
\end{equation}
and
\begin{equation}\label{EEE19}
\sum_{x\in U} a_{wx}   \psi_{(w,x)}=0\:.
\end{equation} 
One can see that if we have the above equations (i.e. (\ref{EEE16}) and (\ref{EEE17}) for each $ u\in U$, and (\ref{EEE18}) and (\ref{EEE19}) for each $w\in W$), 
then by (\ref{E004}) and (\ref{EE001}), the vector $\overrightarrow{\rho}$, given in (\ref{EEE444}), is an eigenvector of  $\widetilde{A}^2$ associated with  
eigenvalue $\xi=1$. We note that the total number of equations in (\ref{EEE16}), (\ref{EEE17}), (\ref{EEE18}) and (\ref{EEE19}) is $2|V|$.
From this set of $2|V|$ equations, however, at least two are redundant. To show this, consider Equation (\ref{EEE18}) for a specific node $w\in W$.
This equation can be derived from all the remaining equations in  (\ref{EEE18}), and the following equation:
\begin{equation}\label{EEE40}
\sum_{x\in U} \sum_{  y\in W} a_{xy}   \psi_{(x,y)}=0\:,
\end{equation}
which itself is obtained by adding up equations in (\ref{EEE16}) for all the nodes in $U$. Similarly, one of the equations in (\ref{EEE19}) can be deemed redundant,
as it can be derived from the rest of the equations in (\ref{EEE19}), and the equation obtained by adding up all the equations in (\ref{EEE17}).
Having at least two redundant equations, and removing them from the system of linear equations, we have now $2|V|-2$ linear equations and 
$2|E|$ variables. As a result, we have at least $2|E|-2|V|+2$ linearly independent solutions for the eigenvector $\overrightarrow{\rho}$.

\bibliographystyle{ieeetr}

\end{document}